\pgfplotsset{compat=1.17}
\newcommand{\V}{V}
\theoremstyle{definition}
\newtheorem{example}{Example}[section]
\newtheorem{theorem}{Theorem}[section]
\newtheorem{lemma}{Lemma}[section]
\newtheorem{proposition}{Proposition}[section]
\newtheorem{corollary}{Corollary}[section]
\newtheorem{remark}{Remark}[section]
\newtheorem{definition}{Definition}[section]
\title{Strategic Analysis of Just-In-Time Liquidity Provision in Concentrated Liquidity Market Makers}
\date{}
\author{Bruno {Llacer Trotti}\footnote{brunolt@ic.ufrj.br} \\UFRJ
\and
Weizhao {Tang}\footnote{twz17519@gmail.com } \\ 
CMU
\and
Rachid {El-Azouzi}\footnote{rachid.elazouzi@univ-avignon.fr} \\
Avignon University
\and
Giulia Fanti\footnote{gfanti@andrew.cmu.edu} \\
CMU
\and
Daniel Sadoc {Menasché}\footnote{sadoc@dcc.ufrj.br} \\
UFRJ
}
\newtheorem*{reptheorem}{Theorem}
\newtheorem*{replemma}{Lemma}
\newtheorem*{repcorollary}{Corollary}
\newtheorem*{repproposition}{Proposition}
\newif\ifshowcomments
  \newcommand{\dsm}[1]{\textcolor{red!50}{DSM: #1}}
  \newcommand{\gf}[1]{\textcolor{blue}{GF: #1}}
  \newcommand{\re}[1]{\textcolor{green}{RE: #1}}
  \newcommand{\bt}[1]{\textcolor{violet!50}{BT: #1}}
  \newcommand{\wt}[1]{\textcolor{orange}{[Weizhao: #1]}}
  \newcommand{\dsm}[1]{}
  \newcommand{\gf}[1]{}
  \newcommand{\re}[1]{}
  \newcommand{\bt}[1]{}
  \newcommand{\wt}[1]{}
\begin{document}

\maketitle

\begin{abstract}
Liquidity providers (LPs) are essential figures in the operation of automated market makers (AMMs); in exchange for transaction fees, LPs lend the liquidity that allows AMMs to operate.
While many prior works have studied the incentive structures of LPs in general, we currently lack a principled understanding of a special class of LPs known as Just-In-Time (JIT) LPs. These are strategic agents who momentarily supply liquidity for a single swap, in an attempt to extract disproportionately high fees relative to the remaining passive LPs.
This paper provides the first formal, transaction-level model of JIT liquidity provision for a widespread class of AMMs known as Concentrated Liquidity Market Makers (CLMMs), as seen in Uniswap V3, for instance.
We characterize the landscape of price impact and fee allocation in these systems, formulate and analyze a non-linear optimization problem faced by JIT LPs, and prove the existence 
of an optimal strategy. 
By fitting our  optimal solution for JIT LPs to real-world CLMMs, we observe that in liquidity pools (particularly those with risky assets), there is  a significant gap between observed and optimal JIT behavior. Existing JIT LPs often fail to account for price impact; doing so, we estimate they could increase earnings by up to 69\% on average over small time windows. 
We also show that JIT liquidity, when deployed strategically, can improve market efficiency by reducing slippage for traders,  
 albeit at the cost of eroding average passive LP profits by up to 44\% per trade. 
\end{abstract}


\section{Introduction}
\label{sec:typesetting-summary}

Recent years have seen increased adoption of Decentralized Exchanges (DEXs)~\cite{yukselretrospective} and  Automated Market Maker (AMM) protocols \cite{DeFi}, which automatically set trade prices and execute trades;
prominent markets running AMMs include Uniswap~\cite{UniswapV3Core}, SushiSwap~\cite{sushiswap2020},  and PancakeSwap~\cite{pancakeswap2021}. At the time of writing, the annual trading volume on AMM-based exchanges exceeds hundreds of billions of dollars~\cite{nasdaq2024pancakeswap}.

Liquidity providers (LPs) are users who lend tokens to specific pools, thereby providing liquidity for trades in exchange for trading fees. LPs are central to the effective operation of AMMs; consequently, many prior works have sought to analyze their behavior~\cite{miori2023defi,berg2022empirical,angeris2021analysisuniswapmarkets,cartea2024decentralized,tang2024game}. For example, several of these works characterize conditions under which LPs are incentivized to contribute liquidity, and strategies for doing so to optimize returns \cite{tang2024game, Reward}.

Despite a growing body of work analyzing the incentives of LPs in AMMs, to our knowledge, few have studied an  important class of LPs known as \emph{Just-in-Time Liquidity Providers} (JIT LPs)~\cite{JIT,JITparadox}.  These are informed   strategic agents who aim to extract profit by providing liquidity in a highly targeted manner. Rather than holding a liquidity position over a period of time, a JIT LP   provides liquidity for a single trade. 
This can be executed by a sandwich attack \cite{sandwich-attacks} on the \emph{target} transaction
by executing a swap that involves frontrunning the target transaction with an ``add liquidity''  transaction and backrunning it with a ``withdraw liquidity''  transaction.  
By temporarily capturing a higher share of liquidity, the JIT trader can earn a larger portion of the trading fee that is paid to all LPs pro rata to each LP's liquidity share.\footnote{Note that the ``trading fee'' (pool fee) is a percentage of the swap value paid by the trader to liquidity providers, whereas the ``transaction fee'' (gas fee) is paid to Ethereum validators to include the transaction in a block. In this paper, except otherwise noted, fees refer to trading fees.}

Although JIT LPs may appear to operate at the margins of an AMM, they play a crucial role in the ecosystem. They act as a frictional force against price movements, they help pools stay aligned with external market opportunities, and they reduce slippage for traders \cite{cryptoeprint:2023/973, JIT }.\footnote{Slippage is a phenomenon by which traders lose money in an AMM due to price changes over the course of a single trade.} 
At the same time, they introduce adversarial costs to passive LPs and arbitrageurs. 
The prevalence of JIT LPs 
 is expected to grow 
 with the emergence of technologies such as Flashbots \cite{Flashbots} and Uniswap V4 \cite{UniswapV4}, which introduce optimized mechanisms for JIT operations (e.g., hooks, bundled transactions). However, we currently lack a theoretical understanding of 
 JIT liquidity provision and its effects on AMMs. 

This work fills a current gap in the literature by developing and analyzing a principled model of JIT LP incentives for an  important class of AMMs known as \emph{Concentrated Liquidity Market Makers (CLMMs)}. 
Introduced in Uniswap~V3~\cite{UniswapV3Core}, CLMMs allow LPs to invest liquidity over fine-grained price ranges; the LP's liquidity is only used to support trades when the AMM price lies within the specified range. This helps LPs reduce risk due to large price fluctuations.
Today, CLMMs are widely used in Uniswap V3 and the mechanism is generalized in emerging Uniswap V4 markets \cite{UniswapV4}. 

Under a model of the CLMM paradigm, we formulate and analyze the JIT resource allocation optimization problem, study its solution properties analytically, and provide an algorithm for computing optimal strategies (\S\ref{sec:algo}). Our analysis reveals new insights into how optimal JIT strategies depend on transactional and pool-specific parameters. 

We summarize our contributions as follows.
\begin{itemize}
    \item \textbf{Price impact landscape:} 
    We analyze how swap-induced price changes affect LP revenue and formally characterize the conditions under which liquidity provision leads to gains or losses. 
    These effects are often called ``impermanent loss''; we instead use the term ``price impact'' to reflect that they do not always result in losses (Def. \ref{def:absprim}).
    This analysis yields a precise description of the parametric region in which LPs benefit from price movements  (\S\ref{sec:imperloss}). This result holds for both \emph{passive} (regular) LPs and JIT LPs.

    \item \textbf{Formulation and analysis of utility optimization for JIT LPs:}
    We model the decision-making process of JIT LPs as a  non-linear optimization problem and define a transaction-level utility function that captures the trade-off between fee revenue and price impact. This formulation enables a rigorous exploration of the strategy and utility space  (\S\ref{sec:fees} and \S\ref{sec:jit-optimization}) . We classify optimal JIT behavior across three  swap scenarios—overpriced, arbitrage-driven, and overshooting trades—and present an algorithm for identifying optimal liquidity positions  (\S\ref{sec:algo} and~\S\ref{sec:data-analysis}).

    \item \textbf{Empirical insights on real-world JIT performance:}
    Using on-chain data, we report three key empirical findings (\S\ref{sec:data-analysis}). First, JIT traders today appear to allocate trades suboptimally, without accounting for price impact; our experiments suggest that this unawareness reduces JIT profit by up to 41\% on average.
    Second, although JIT LPs currently account for a small share of total fee revenue, 
    most of JIT revenue comes from price impact (93.7\%) rather than extracting fees from passive LPs. Third, JIT liquidity provision can become a serious competitor to passive LPs by reducing their fee income by up to 40\%, while slightly reducing the cost of slippage for traders. 
\end{itemize}


\section{Background and Model}
\label{sec:backgr}

\subsection{Introduction to AMMs}

Automated Market Makers (AMMs) enable decentralized token exchanges through algorithmic pricing instead of traditional order books. 
We begin by explaining one of the most common classes of AMMs known as constant product market makers (CPMMs) \cite{CONSTANT}. 
A CPMM maintains a liquidity pool of tokenized assets $X$ and $Y$.
After initialization, the CPMM maintains a pool price $q$. Throughout this paper, \(q\) denotes the amount of token \(Y\) per unit of token \(X\), i.e., the price of token \(X\) denominated in token \(Y\). Equivalently, \(1/q\) is the price of token \(Y\) denominated in token \(X\). 
When an LP enters the CPMM, they contribute an amount of liquidity $L$, which comprises token amounts $x$ of $X$ and $y$ of $Y$, according to the following rule: 
%
\begin{equation}
    x = \frac{L}{\sqrt{q}}, \qquad y = L \sqrt{q}. 
    \label{eq:pos-amount-legacy}
\end{equation}

Consequently, the token amounts $x, y$ of each liquidity position of amount $L$ always follow the constant product rule $xy = L^2$. 
Additionally, when there exist multiple liquidity positions $\{(x_i, y_i, L_i)\}_{i=1}^N$, the corresponding summations also follow the constant product rule 
\begin{equation}
    x_{\text{total}} \cdot y_{\text{total}} = \textstyle \left(\sum_{i=1}^N x_i \right) \cdot \left(\sum_{i=1}^N y_i \right) = \left(\sum_{i=1}^N L_i \right)^2 = L_{\text{total}}^2. \label{constForm}
\end{equation}

A CPMM supports three basic operations -- mint (add liquidity), burn (remove liquidity), and swap (trade). 
\begin{itemize}
    \item Mint and burn transactions are executed by an LP who specifies liquidity $L$ to add/remove, and will deposit/withdraw token amounts $(x, y)$ following \eqref{eq:pos-amount-legacy}. $q$ remains constant. 

    \item Swaps are executed by a trader who specifies an amount of either token to pay. Without loss of generality, assume the trader pays $\Delta x$ tokens of type $X$. 
    $L$ remains unchanged for each liquidity position, so the trader will get $\Delta y$ $Y$ tokens following \eqref{constForm}:
    \[
        (x_{\text{total}} + \Delta x) (y_{\text{total}} - \Delta y) = L_{\text{total}}^2. 
    \]
    As a result, the pool has a new price $q' = (y_{\text{total}} - \Delta y) / (x_{\text{total}} + \Delta x)$ and the equivalent token amount $(x', y')$ of each liquidity position $L$ will update by plugging $q'$ into \eqref{eq:pos-amount-legacy}. 
\end{itemize}

\emph{Slippage} is an important concept in this domain; it refers to the difference between the execution price of a trade and the price expected at the time of trade initiation. 
In the context of AMMs, slippage arises from the fact that large trades move the price along the bonding curve, resulting in a marginally worse exchange rate for each successive unit traded. Slippage is an inherent byproduct of price impact in   AMMs.

\begin{example}[Slippage]
\label{ex:slippage}
Consider an AMM with 100 ETH and 10,000 USDC, so \( q = \frac{\textbf{\#USDC}}{\textbf{\#ETH}} = 100 \), i.e., the pool price is 100 USDC per ETH. Suppose a trader wishes to buy 10 ETH. Due to the constant-product formula \( x \cdot y = k \), the required USDC input is such that 
$
100 \times 10{,}000 = (100 - 10) \times y_{\text{final}}$, i.e., $y_{\text{final}}  \approx 11{,}111.11$.
Thus, the trader must pay approximately 1,111.11 USDC to receive 10 ETH, implying an average price of 111.11 USDC per ETH — higher than the original pool price. The slippage is:
$
({111.11 - 100})/{100} = 11.11\%$.
This increase reflects the adverse price movement caused by the trader’s own order.
\end{example}

\subsection{Concentrated Liquidity Market Makers}

Since the emergence of CPMMs, more complex AMM designs have been proposed; among the most widespread of these is  Concentrated Liquidity Market Makers (CLMMs) \cite{CLMM}. They allow LPs to choose not only the amount of liquidity they wish to add to the pool, but also the price range in which the liquidity is active.
When an LP adds liquidity to the CLMM, the LP  creates a \emph{liquidity position} by specifying a tuple $(L, a, b)$, where $L$ is the amount of liquidity, and $(a, b)$ represents the price range in which the position is effective. During a trade, if the AMM price exits the specified range $(a,b)$ of an LP's position, that liquidity will not be used to support the trade outside the specified price range. At price $q$, this liquidity position reserves token amounts 
\begin{equation}
x = x^{(a,b)}(q) \triangleq L \cdot \left(\frac{1}{\sqrt{\hat q_{a,b}}}-\frac{1}{\sqrt b}\right), \qquad
y = y^{(a,b)}(q) \triangleq L \cdot \left(\sqrt{\hat q_{a,b}}-\sqrt a\right),    \label{eq:position-holdings}
\end{equation}
where \(\hat q_{a,b}=\min\{b,\max\{a,q\}\}\) is the projection of the price $q$ onto the interval $(a,b)$.  By setting $(a, b) = (0, \infty)$, this relation reduces to \eqref{eq:pos-amount-legacy}, which implies that a CPMM is a special case of a CLMM when every LP chooses the full price range. However, in general cases where $0 < a < b < \infty$ , the reserves of the CLMM liquidity position remain    constant with $x = 0$ when $q \ge b$ and with $y = 0$ when $q \le a$. This implies that 1) the position is unused when $q$ moves out of range $(a, b)$, and 2) there exists a maximum reserve of token $X$ and token $Y$ for such positions. 

A CLMM also supports mint, burn and swap operations, as long as they are consistent with the total reserves in the pool (e.g. do not try to swap tokens that do not exist). 
    A swap outputs $\Delta y$ in the following steps:
  \begin{enumerate}
 \item \textbf{Calculate the new price $q'$}: Since, for each position, $x$ is a continuous and monotonic function of $q$, the total liquidity reserve $x_{\text{total}}$ is also continuous and monotonic in $q$. 
Therefore, we can determine the post-swap price $q'$ by solving for the value of $q'$ that satisfies $x_{\text{total}}(q') = x_{\text{total}}(q) + \Delta x(q')$, which is well-defined due to the invertibility of $x_{\text{total}}(\cdot)$. 
Provided that $q'$ remains within the union of all active price ranges, the swap can be successfully executed (refer to~\cite{UniswapV3Core} for implementation details).
    \item \textbf{Update the token reserve}: Each position keeps $(L, a, b)$ constant, so we recalculate $x'$ and $y'$ by plugging $q'$ into~\eqref{eq:position-holdings}. The change in the second token, $\Delta y$, can then be obtained by summing all $y'$ values.
\end{enumerate}


By concentrating liquidity within a narrower price range, CLMMs enable each active position to support a higher volume of trades with the same amount of capital. This increased capital efficiency was designed to reduce price slippage for traders - as larger trades can be absorbed with smaller price movements \cite{UniswapV3Core}- and to allow LPs to extract more from their money, since they can have a higher liquidity, therefore a higher share of the fees, using the same budget by picking a narrower range.
At the same time, when the market price exits a position's specified range, that liquidity becomes inactive or ``frozen,'' and only the remaining active positions support the trade. As a result, traders of such trades may experience higher slippage compared to a scenario where liquidity is distributed across the entire price spectrum.

In practice, CLMMs discretize the price axis using a system of \emph{ticks}. Let $T$ be the set of possible ticks,  $T = \{ t_i \;|\; i = 0, \ldots, M \}$,   with $M \in \mathbb{Z}_{>0}$.\footnote{Ticks are defined by exponential spacing: $T = \{ t_i \;|\; i = 0, \ldots, M \}$, where $t_i = 1.0001^{\tau (i - \iota)}$ with $\tau, M \in \mathbb{Z}_{>0}$ and $\iota \in \mathbb{Z}$.} As a result, the set of valid price ranges is given by $\mathcal{R} = \{(a, b) \in T \times T \mid a < b\}$. Liquidity providers are restricted to choosing ranges $(a, b) \in \mathcal{R}$, so liquidity remains constant between consecutive ticks $(t_m, t_{m+1})$. Within each such interval (or across consecutive intervals with identical liquidity), price changes follow the structure defined in~\eqref{eq:position-holdings}, which reveals a tight coupling between price movement and the liquidity available in the pool.

\subsubsection{Fees} \label{sec:fees}

One of the incentives for liquidity providers to participate in the pool is the collection of fees. Each transaction pays a fee proportional to the input amount, and this fee is distributed pro-rata among all providers with active liquidity in the corresponding price range. The fee amount varies across pools, each characterized by a parameter \( \alpha \in (0,1) \) that determines the fraction of the input amount collected as a trading fee.

In a CLMM, every time a provider mints a position \( (L,a,b) \), they increase the liquidity present in every tick \(\ t_m \in (a,b)\). To assess the liquidity available at a specific tick \( m \), we  sum the liquidity of all positions that include it. Formally, we define the  liquidity provided at tick \( m \) 
by a provider \( n \in [N]\) as \( P_{n,m}\) and calculate the total liquidity as: 
\begin{equation}
    P_m = \sum_{n \in [N]} P_{n,m}
\end{equation}
Let \( \delta = (\delta_m)_{m=1}^{M} \) 
 denote the vector of fees (in dollars) collected per tick during a swap.
Then,  \( \delta_m \triangleq \Delta  x_m {\cdot}  \alpha  {\cdot}p_x\), where \(\Delta x_m\) is the amount of tokens exchanged in tick $m$ 
\begin{equation*}
 \Delta x_m =
    P_{n,m}\left(\frac{1}{\sqrt{\hat{q}_m'}} - \frac{1}{\sqrt{a_m}}\right).
\end{equation*}

As discussed above, the total fee from  a swap is distributed pro-rata among all providers with active liquidity. The fee earned by provider \( n \) is given by:
\[
\mathcal{F}_n = \sum_{m=1}^{M} \delta_m \cdot \frac{P_{n,m}}{P_m},
\]
where \( \delta_m \) is the total fee from swap \( m \) and \( P_{n,m} \) is the liquidity contributed by passive provider \( n \) over tick \( m \).

\subsection{Price Impact (a.k.a. Impermanent Loss)} \label{sec:imperloss}

Prior literature commonly assumes that the market-implied price ratio between two tokens is stable and externally set, given by \(q_{\textrm{market}} = \frac{p_x}{p_y}\), where \( p_x \) and \( p_y \) denote the market prices of tokens \( X \) and \( Y \), respectively \cite{tang2024game,LVRvsILI,LVRvsILII}. Under our convention, this ratio is the amount of token \(Y\) per unit of token \(X\), i.e., the market price of token \(X\) denominated in token \(Y\). 
However, the actual price within a CLMM pool is determined endogenously—it is only changed by trades in the pool. 
If it does not coincide with the market price, 
arbitrage activities become profitable. 
At equilibrium—when no arbitrage opportunities remain—the two values align, such that \( q = q_{\text{market}} \). This alignment is often assumed in theoretical analyses, particularly in the absence of active participants like JIT LPs and arbitrageurs. 

The divergence between the pool price \(q\) and the market price thus represents both a risk and an opportunity. On one hand, price mismatches expose LPs to impermanent loss—the difference in value between a liquidity position and the equivalent token holdings outside the pool. On the other hand, such mismatches create arbitrage opportunities for sophisticated agents, including JIT LPs.  To encompass both favorable and adverse outcomes of price divergence, we adopt a more general term and refer to the impermanent loss effect as \emph{price impact} throughout the remainder of this paper.

Consider an LP $n$ who mints a position when the pool price is \( q \) and later withdraws it at pool price \( q' \). Let \( p_x \) and \( p_y \) denote the external market prices of tokens \( X \) and \( Y \) at the time of minting, and let \( p_x' \) and \( p_y' \) be their respective market prices at the time of withdrawal.  Following~\eqref{eq:position-holdings}, we denote by \( x_n(q) \) and \( y_n(q) \) the token amounts associated with  the position of LP~\( n \)  when the pool price is \( q \) (at minting), and by \( x_n(q') \) and \( y_n(q') \) the corresponding amounts when the price is \( q' \) (at withdrawal).
The dollar value of the position at the time of minting and at the time of withdrawal is:
\begin{align*}
V_\text{mint}(L,  p_x, p_y, q) &= p_x \cdot x_n(q) + p_y \cdot y_n(q), \\
V_\text{withdraw}(L, p'_x, p'_y, q') &= p_x' \cdot x_n(q') + p_y' \cdot y_n(q').
\end{align*}

\begin{definition}[Absolute price impact] \label{def:absprim}
The \emph{absolute price impact} is the net change in value of the position:
\begin{equation}
\mathcal{C}_n(\Delta x, L; p_x, p_y, p'_x, p'_y, q, q') \triangleq V_\text{mint}(L,  p_x, p_y, q) - V_\text{withdraw}(L,  p'_x, p'_y, q'). \label{eq:priceimpact}
\end{equation}
\end{definition}

In what follows, to simplify notation we omit the explicit dependencies on parameters of the above quantities, e.g., referring to  $\mathcal{C}_n(\Delta x, L; p_x, p_y, p'_x, p'_y, q, q') $  simply as $\mathcal{C}_n$. 
Note that a \emph{negative} price impact (\( \mathcal{C}_n < 0 \)) corresponds to a \emph{gain} for the liquidity provider relative to holding the assets outside the pool. Conversely, a \emph{positive} price impact implies a relative loss.

\begin{definition}[Relative price impact]
The   \emph{relative price impact} is the ratio of this value difference to the initial position value:
\begin{equation}\label{eq:RelativePrice}
\mathbf{PI} \triangleq {\mathcal{C}_n }/{V_\text{mint} }. 
\end{equation}
\end{definition}

Throughout the rest of this paper, we omit the LP subscript \( n \) when the context is clear, and refer to quantities such as \( x_n(q) \), \( y_n(q) \) simply as \( x(q) \), \( y(q) \).

\section{When is Price Impact Beneficial to LPs?}
\label{sec:price-impact-conditions}

As explained previously, an LP $n$ gains when the fees accrued $\mathcal F_n$ are larger than the price impact $\mathcal C_n$. 
However, LPs may struggle to predict  these tradeoffs  for three reasons:
(1) Price fluctuations and transaction patterns are inherently unpredictable. 
(2) Given information about price and transaction fluctuations, we currently lack closed-form expressions showing when LPs will benefit as a function of transaction size and current market conditions. 
(3) Even if we had knowledge of items (1) and (2), most LPs invest on a slow timescale (days or even weeks \cite{liqFragmentation}), over which it is not possible to take advantage of per-transaction profits.

A key observation is that Just-In-Time (JIT) LPs do not suffer from the 1st and 3rd constraints---they can act quickly enough to move on a per-transaction basis using current price data. 
Hence, if we can resolve item (2), JIT LPs can  determine with some certainty when CLMM transactions will be profitable. 

In this section, we resolve challenge (2) by formally characterizing the conditions under which \( \mathcal{C} \) is non-positive or positive in CLMMs; we show that the answer depends on whether the pool price moves toward or away from the prevailing market price. 
Our results in this section are not specific to JIT LPs, but as mentioned previously, passive LPs may not be able to take advantage of these results, as they move their liquidity on much slower time scales.

\noindent \textbf{Assumptions.~~}
While deriving the following results,  we assume that the market prices of tokens \( X \) and \( Y \) remain constant throughout the trade, i.e., \( p_x = p_x' \) and \( p_y = p_y' \). This is because the trade occurs instantaneously when the block including the transaction is finalized on the blockchain. 
In this case, the price impact simplifies to the difference in the dollar value of the LP’s position before and after the trade, evaluated using fixed market prices. According to Definition~\ref{def:absprim}:
\begin{align}
\mathcal{C}(\Delta x, L; p_x, p_y, q, q')
&= p_x  \left(x(q) - x(q')\right) + p_y  \left(y(q) - y(q')\right) \nonumber \\
&= -p_x  \Delta x(q, q') -p_y  \Delta y(q, q'). \label{eq:priceimpact1b}
\end{align} 
In what follows, to simplify notation we refer to $\mathcal{C}(\Delta x, L; p_x, p_y, q, q')$ simply as $\mathcal{C}$.


\subsection{Conditions for Favorable Price Impact}
Our first main result consists of necessary and sufficient conditions under which a price change \( q \to q' \) results in non-positive price impact. This threshold condition delineates precisely when an LP's position becomes favorable or unfavorable relative to simply holding the tokens at market prices. 

\newcounter{tempEqCnt} 
\refstepcounter{equation}
\setcounter{tempEqCnt}{\value{equation}}

\begin{theorem}[Threshold condition for price impact]\label{theorem:price_impact} 
Let       $\hat{q} \triangleq \max\{a, \min\{b, q\}\}$, and similarly for $\hat{q}'$. 
The change in pool price relates to price impact as follows:
\begin{equation*}
\begin{array}{c|c}
\begin{minipage}{0.45\textwidth}
\textbf{Case 1:} \( q' < q \): \( \mathcal{C} \le 0 \) if and only if
\begin{equation}
\frac{1}{\hat{q}} \left( \frac{p_x}{p_y} \right)^2 \ge \hat{q}'.\tag{\thetempEqCnt}  \label{ineq1}
\end{equation}
\end{minipage}
&
\begin{minipage}{0.45\textwidth}
\textbf{Case 2:} \( q < q' \): \( \mathcal{C} \le 0 \) if and only if
\begin{equation}
\frac{1}{\hat{q}} \left( \frac{p_x}{p_y} \right)^2 \le \hat{q}'. 
\tag{\number\numexpr\value{tempEqCnt}+1\relax} \label{ineq2}
\end{equation}
\end{minipage}
\end{array}
\end{equation*}
\setcounter{equation}{\numexpr\value{tempEqCnt}+1\relax}
\end{theorem}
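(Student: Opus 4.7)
The plan is to plug the CLMM position holdings from \eqref{eq:position-holdings} directly into the simplified price-impact expression \eqref{eq:priceimpact1b} and factor the result so that the sign of $\mathcal{C}$ is read off a product of two simple terms. Concretely, I would write
\[
\mathcal{C} = p_x L\left(\tfrac{1}{\sqrt{\hat q}} - \tfrac{1}{\sqrt{\hat q'}}\right) + p_y L\left(\sqrt{\hat q} - \sqrt{\hat q'}\right),
\]
and then substitute $u \triangleq \sqrt{\hat q}$, $v \triangleq \sqrt{\hat q'}$, so that $\mathcal{C}$ becomes
\[
\mathcal{C} \;=\; \frac{L}{uv}\,(v-u)\,\bigl(p_x - p_y\, u v\bigr).
\]
Since $L, u, v > 0$, the sign of $\mathcal C$ equals the sign of $(v-u)(p_x - p_y uv)$. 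This is the decisive algebraic identity and obtaining it is really the only nontrivial step; everything else is just bookkeeping.

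Next I would note that the projection $q \mapsto \hat q$ onto $[a,b]$ is weakly monotone, so $q' < q$ implies $v \le u$ and $q < q'$ implies $v \ge u$. In the degenerate sub-case $\hat q = \hat q'$ (e.g., both prices on the same side of the range), $\mathcal C = 0$, and both stated inequalities hold (with equality after squaring), so these cases are absorbed harmlessly. Otherwise, the sign of $(v-u)$ is fixed by the case, and $\mathcal C \le 0$ becomes equivalent to the sign condition on the second factor: in Case~1 we need $p_x - p_y u v \ge 0$, and in Case~2 we need $p_x - p_y u v \le 0$.

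Finally, since $p_x, p_y, u, v > 0$, the inequality $p_x \gtrless p_y u v$ can be squared and rewritten as
\[
\Bigl(\tfrac{p_x}{p_y}\Bigr)^{\!2} \gtrless u^2 v^2 = \hat q\, \hat q',
\]
which rearranges to $\tfrac{1}{\hat q}\bigl(\tfrac{p_x}{p_y}\bigr)^2 \gtrless \hat q'$, matching \eqref{ineq1} and \eqref{ineq2} exactly. The main (mild) obstacle is just getting the factorization right and verifying that the ``$\le$'' direction survives the boundary cases where $\hat q = \hat q'$; neither squaring step loses information because all quantities involved are strictly positive.
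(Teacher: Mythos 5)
Your proof is correct in substance and follows essentially the same route as the paper's: both substitute the position holdings \eqref{eq:position-holdings} into the simplified price impact and reduce the sign condition to $p_x/p_y \gtrless \sqrt{\hat q\,\hat q'}$; your explicit factorization $\mathcal{C} = \tfrac{L}{uv}(v-u)(p_x - p_y uv)$ is just a cleaner packaging of the paper's cancel-and-divide step. One small inaccuracy: in the degenerate case $\hat q = \hat q'$ you claim both inequalities \eqref{ineq1} and \eqref{ineq2} hold, but they reduce to $(p_x/p_y)^2 \ge \hat q^2$ and $(p_x/p_y)^2 \le \hat q^2$, which need not hold simultaneously unless $p_x/p_y = \hat q$; like the paper, you should simply exclude this case (where $\mathcal{C}=0$ trivially) rather than assert the inequalities are satisfied.
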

\begin{proof}[Proof sketch]
(Full proof in Appendix \ref{sec:priceimpactproof})
We establish    conditions  under which an LP experiences non-positive price impact. Substituting \(p_x = p_x', p_y = p_y'\) into~Eq.\eqref{eq:RelativePrice}:
\begin{equation}
 \mathbf{PI} = 1-\frac{V_\textrm{withdraw}(L,p_x,p_y,q')}{V_\textrm{mint}(L,p_x,p_y,q)}  \le 0 \iff 1 \le \frac{p_x x_n(q') + p_y y_n(q')}{p_x x_n(q) + p_y y_n(q)}. 
\end{equation}
Therefore, the inequality   holds iff  $V_\textrm{withdraw} \ge V_\textrm{mint}$. 
 The result follows from   substituting Eq.~\eqref{eq:position-holdings} into the above expression, followed by  algebraic manipulation. 
\end{proof}

\addtocounter{equation}{1}


Based on this result, we can establish sufficient conditions for either favorable or adverse price impact. These are summarized in the following corollary.

\begin{corollary}[Gains under diverging prices  and losses under  converging prices]\label{coro2}
When initial and final prices do not cross $p_x/p_y$, gains and losses are determined by the direction of the price movement:
\begin{itemize}
    \item Gain under diverging prices: 
\( \mathcal{C} \le 0 \) if \( q' < q \leq  {p_x}/{p_y} \)  or \( q' > q \ge {p_x}/{p_y} \), i.e., if  a trade moves the AMM price \emph{away} from the external market price, the LP experiences gains.
\item Loss under converging prices:  \( \mathcal{C} \ge 0 \) if \( q < q' < {p_x}/{p_y} \)  or \( q > q' > {p_x}/{p_y} \),  i.e., if  a trade moves the AMM price \emph{towards} the external market price, the LP experiences losses.
\end{itemize}
\end{corollary}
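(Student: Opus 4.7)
The plan is to deduce both parts of the corollary as immediate consequences of Theorem~\ref{theorem:price_impact}, by bounding the projected prices $\hat q$ and $\hat q'$ on the correct side of $p_x/p_y$ via monotonicity. I would first note that $q \mapsto \hat q = \max\{a,\min\{b,q\}\}$ is non-decreasing, so $q' < q$ gives $\hat q' \le \hat q$ and $q < q'$ gives $\hat q \le \hat q'$, and then rewrite \eqref{ineq1}--\eqref{ineq2} uniformly as $\hat q \cdot \hat q' \le (p_x/p_y)^2$ (for $q' < q$) and $\hat q \cdot \hat q' \ge (p_x/p_y)^2$ (for $q' > q$). The rest of the argument reduces to locating $\hat q \cdot \hat q'$ relative to $(p_x/p_y)^2$.

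For the gain part, I would take $q' < q \le p_x/p_y$. When $a \le p_x/p_y$, the projection satisfies $\hat q = \max\{a,\min\{b,q\}\} \le \max\{a,\, p_x/p_y\} = p_x/p_y$, and likewise $\hat q' \le p_x/p_y$, so $\hat q \cdot \hat q' \le (p_x/p_y)^2$, and Case~1 of the theorem yields $\mathcal{C} \le 0$. The degenerate case $a > p_x/p_y$ forces $q, q' < a$, so $\hat q = \hat q' = a$; by~\eqref{eq:position-holdings} the position's token holdings are unchanged, and~\eqref{eq:priceimpact1b} gives $\mathcal{C} = 0$. The symmetric sub-case $q' > q \ge p_x/p_y$ is identical, using Case~2 of the theorem and swapping the roles of $a$ and $b$ (one argues $\hat q, \hat q' \ge p_x/p_y$ whenever $b \ge p_x/p_y$, and otherwise both prices lie above the range so $\hat q = \hat q' = b$).

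For the loss part, I would take $q < q' < p_x/p_y$. The task is to show that the condition for $\mathcal{C} \le 0$ in Case~2 of the theorem fails or holds with equality; equivalently, $\hat q \cdot \hat q' \le (p_x/p_y)^2$. The same bounding argument applied to both $\hat q$ and $\hat q'$ delivers this whenever $a \le p_x/p_y$, yielding $\mathcal{C} \ge 0$; when $a > p_x/p_y$, both prices lie below the range and one again has $\hat q = \hat q'$ with $\mathcal{C} = 0$ by~\eqref{eq:priceimpact1b}. The mirror sub-case $q > q' > p_x/p_y$ is handled the same way via Case~1.

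The main obstacle will be the bookkeeping for the boundary configurations in which the LP's range sits entirely outside $\{q, q'\}$: there the projection collapses $\hat q = \hat q'$, and the strict iff of Theorem~\ref{theorem:price_impact} is not directly informative. Once these are dispatched by a direct appeal to~\eqref{eq:position-holdings} (the position is inactive, so $\mathcal{C} = 0$ and both desired inequalities hold), the remaining non-degenerate cases reduce to the one-line bound $\hat q, \hat q' \le p_x/p_y$ (or $\ge p_x/p_y$) and fall out immediately from the theorem.
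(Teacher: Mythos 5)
Your proof is correct and follows essentially the same route as the paper's: both deduce the corollary by locating $\hat{q}\,\hat{q}'$ relative to $(p_x/p_y)^2$ and invoking the two cases of Theorem~\ref{theorem:price_impact}. In fact your version is somewhat more careful than the paper's own proof, which only writes out the two converging (loss) sub-cases and passes from $(p_x/p_y)^2 > qq'$ to $\frac{1}{\hat{q}}(p_x/p_y)^2 > \hat{q}'$ without addressing the degenerate configurations where the range $(a,b)$ lies entirely on one side of both prices --- exactly the boundary cases you dispatch by observing $\hat{q} = \hat{q}'$ and $\mathcal{C} = 0$.
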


\begin{remark}
    [No-loss if pool price initially aligned with market]\label{theorem1}
Note that if the pool price is initially aligned with the market price, i.e., \( q = {p_x}/{p_y} \), then any nonzero price movement necessarily moves the pool price away from the market-implied price. By the gain-under-diverging-prices clause of Corollary~\ref{coro2}, the price impact \( \mathcal{C} \) experienced by a liquidity provider is therefore non-positive: \( \mathcal{C} \le 0 \).
\end{remark}


Intuitively, when the pool price moves toward the market price (the loss-under-converging-prices clause of Corollary~\ref{coro2}), traders obtain better execution, acquiring tokens from the pool at prices closer to fair market value—resulting in a loss for LPs. Conversely, when the pool price moves away from the market price (the gain-under-diverging-prices clause of Corollary~\ref{coro2}), traders face worse execution, effectively overpaying and enriching LPs. 

\subsection{The Effects of Liquidity}
We analyze the limiting behavior of price impact as  available liquidity changes. Suppose we have a price range $(a, b)$ which is currently covered by total liquidity $L$ by passive LPs. A JIT LP is considering adding liquidity to this range; intuitively, when liquidity \( L \) tends to zero, the position is too small to have any meaningful participation, and the price impact vanishes. Conversely, as \( L \to \infty \), the position becomes large enough that the price becomes almost static. Formally, we establish the following result

\begin{lemma}[Limiting behavior of price impact]\label{lemma:limit_price_impact}
    Let \( \Delta x \) be the quantity of token \( X \) exchanged in a trade. Consider \( q \in (t_m, t_{m+1}]\) the current pool price and consider \( ( L, a,b)\) to be the single position of some provider $n$ such that $q \in (a,b)$, in words, consider \( L \) to be all the liquidity minted before the transaction by some provider $n$. 
    Then:
\[
\lim_{L \to \infty} \mathcal{C} = \Delta x \cdot (p_y \cdot q - p_x), \quad \text{and} \quad \lim_{L \to 0} \mathcal{C} = 0.
\]
\end{lemma}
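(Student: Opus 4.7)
My plan is to reduce the lemma to a direct algebraic calculation by writing $\mathcal{C}$ explicitly as a closed-form function of $L$ and then passing to the limits. First, I would use the CLMM swap mechanics of Section~\ref{sec:backgr} to solve for the post-trade price $q'$. Writing $L_{\text{tot}} = L + L_{\text{other}}$ for the total active liquidity spanning the tick (with $L_{\text{other}} \ge 0$ the fixed contribution of any other LPs), the pool-level analogue of Eq.~\eqref{eq:position-holdings} yields
\[
\frac{1}{\sqrt{q'}} \;=\; \frac{1}{\sqrt{q}} + \frac{\Delta x}{L_{\text{tot}}},
\]
so that $\sqrt{q'}$ is a smooth, monotone function of $L$ as long as $q' \in (a,b)$.

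Second, I would substitute this $q'$ back into the single-position formulas in~\eqref{eq:position-holdings} to obtain the pro-rata expressions
\[
\Delta x_n \;=\; L\!\left(\tfrac{1}{\sqrt{q'}}-\tfrac{1}{\sqrt{q}}\right) \;=\; \frac{L}{L_{\text{tot}}}\,\Delta x, \qquad \Delta y_n \;=\; L\!\left(\sqrt{q'}-\sqrt{q}\right) \;=\; -\frac{L\,q\,\Delta x}{L_{\text{tot}}+\sqrt{q}\,\Delta x},
\]
and plug them into~\eqref{eq:priceimpact1b}. This turns $\mathcal{C}$ into a rational function of $L$, on which both limits are immediate: as $L\to\infty$, the ratio $L/L_{\text{tot}}\to 1$ and $Lq\Delta x/(L_{\text{tot}}+\sqrt{q}\,\Delta x)\to q\Delta x$, giving $\mathcal{C}\to -p_x\Delta x + p_y q\Delta x = \Delta x(p_y q - p_x)$; as $L\to 0$, the common factor of $L$ appearing in both $\Delta x_n$ and $\Delta y_n$ forces $\mathcal{C}\to 0$.

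The main subtlety I anticipate is the $L\to 0$ limit in the regime where provider $n$ supplies all of the in-range liquidity (so $L_{\text{other}}=0$ and $L_{\text{tot}}=L$): a fixed input $\Delta x$ cannot be absorbed without $q'$ escaping the range $(a,b)$, and the closed-form $q'$ above is only valid once $L \ge \Delta x/(1/\sqrt{a}-1/\sqrt{q})$. I would resolve this either by (i) interpreting $\Delta x$ as the amount actually exchanged, which is itself $O(L)$ once the trade saturates the boundary at $a$ and therefore drives both $\Delta x_n$ and $\Delta y_n$ (hence $\mathcal{C}$) to zero, or (ii) keeping some positive background liquidity $L_{\text{other}}>0$ fixed, under which the rational expression for $\mathcal{C}(L)$ is well-defined for all $L\ge 0$ and the limit $\mathcal{C}\to 0$ follows by inspection. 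The $L\to\infty$ direction carries no such hazard, since large liquidity only pins $q'$ closer to $q$.
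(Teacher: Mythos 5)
Your proposal is correct and follows essentially the same route as the paper's proof: both derive the closed-form relation $1/\sqrt{q'} = 1/\sqrt{q} + \Delta x/(L+P_m)$ within the active tick, substitute it into Eq.~\eqref{eq:priceimpact1b} to express $\mathcal{C}$ as a rational function of $L$, and read off the two limits from the ratios $L/(L+P_m)\to 1$ and $q\,\Delta x\,(L+P_m)/(\Delta x\sqrt{q}+L+P_m)\to q\,\Delta x$. Your explicit handling of the $L\to 0$ degeneracy when provider $n$ holds all in-range liquidity is in fact slightly more careful than the paper, which simply assumes strictly positive background liquidity $P_m$ at the tick.
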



 \begin{proof}[Proof sketch]
    (Full proof in Appendix \ref{sec:prooflemalimitingbehavior})
    The proof follows from Eq.~\eqref{eq:priceimpact1b} and the following facts derived in~\cite{UniswapV3Core} (see also Eq.~\eqref{eq:position-holdings}): $\frac{\Delta x}{L} \triangleq \Delta \frac{1}{\sqrt{q}} ={1}/{\sqrt{q'}}- {1}/{\sqrt{q}}$ and $\frac{\Delta y}{L} \triangleq \Delta \sqrt{q} = \sqrt{q'}- \sqrt{q}$. 
Then, \( \Delta y = -\Delta x \sqrt{q'} \sqrt{q} \). 
    As \( L \to \infty \), we have $q' \to q$.  
    In the limit,    \( \Delta y \to - q \Delta x  \), and 
   $  \mathcal{C} \to -  p_x \Delta x + p_y  q \Delta x$ (see Eq.~\eqref{eq:priceimpact1b}). 
    As \( L \to 0 \), the LP's position is infinitesimal, yielding negligible participation in the trade and thus \( \mathcal{C} \to 0 \).
\end{proof}


\begin{figure} \centering
\begin{tikzpicture}[scale=2]

    \definecolor{lightblue}{rgb}{0.8, 0.93, 1}
    \definecolor{lightgreen}{rgb}{0.85, 1, 0.85}
    \definecolor{lightred}{rgb}{1, 0.85, 0.85}
    \definecolor{lightyellow}{rgb}{1, 1, 0.75}
    \definecolor{lightpurple}{rgb}{0.9, 0.85, 1}
    \definecolor{lightorange}{rgb}{1, 0.93, 0.8}

    \fill[lightred] (0,1) rectangle (1,2);            
    \fill[lightgreen] (1,1) rectangle (2,2);            
    \fill[lightred]   (1,1) -- (2,2) -- (2,1) -- cycle; 
    \fill[lightred](1,0) rectangle (2,1);            
    \fill[lightgreen](0,0) rectangle (1,1);            
    \fill[lightred](0,0) -- (0,1) -- (1,1) -- cycle; 

    \fill[lightgreen,domain=0.5:1, variable=\x]
        plot[smooth, samples=100] 
        (\x, {1/(\x)}) --
        (1,2) -- (0.5,2) -- (0.5,1) -- cycle;

    \fill[lightgreen,domain=1:2, variable=\x]
        (1,0.5) --
        plot[smooth, samples=100] 
        (\x, {1/(\x)}) --
        (2,1) -- (2,0) -- (1,0)  -- cycle;

    \draw[->] (-0.1, 0) -- (2.2, 0) node[right] {\( q \)};
    \draw[->] (0, -0.1) -- (0, 2.2) node[above] {\( q' \)};

    \draw[dashed, thick] (0,0) -- (2,2) node[anchor=south west] {\( q = q' \)};

    \draw[dotted, thick] (1,0) -- (1,2) node[anchor=south] {\(  q=p_x/p_y \)};

 \draw[dotted, thick] (0,1) -- (2,1) node[anchor=west] {\(  q'=p_x/p_y \)};
    
    \draw (1, 0) -- (1, -0.05) node[below] {\( 1 \)};

     \draw (0,1) -- ( -0.05,1) node[left] {\( 1 \)};

    \draw[thick, black, domain=0.5:2, samples=100] 
        plot (\x, {1/(\x)}) node[pos=0.9, above left] { };

    \node at (0.8, 1.7) {\textbf{I}};
    \node at (1.3, 1.7) {\textbf{II}};
    \node at (0.5, 0.8) {\textbf{VII}};
    \node at (1.6, 0.8) {\textbf{IV}};
    \node at (0.5, 1.2) {\textbf{VIII}};
    \node at (1.6, 1.2) {\textbf{III}};
       \node at (0.8, 0.5) {\textbf{VI}};
    \node at (1.3, 0.5) {\textbf{V}};

\end{tikzpicture}
\caption{Partition of the \((q, q')\)-space with \( {p_x}/{p_y} = 1 \); black curve corresponds to \( q' = {1}/{q} \). Red and green  regions correspond to \( \mathbf{PI} \ge 0 \) and \( \mathbf{PI} \leq 0 \), respectively. 
}
\label{fig:partition}
\end{figure}
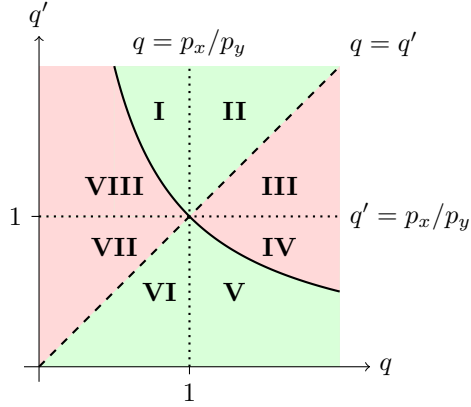
This result reinforces the intuition that liquidity acts as a dampener of price impact: as liquidity increases, the pool becomes less sensitive to trade-induced price shifts, and the value transfer is increasingly governed by the degree of misalignment between pool and market prices. To illustrate the broader implications of price dynamics and their relationship with price impact, we present the following example.

\begin{example}
Figure~\ref{fig:partition} provides a geometric interpretation of price impact in the \((q, q')\)-plane, where  the market price ratio is fixed at \( p_x/p_y = 1 \). The space is partitioned into colored regions: green for \( \mathcal{C} \le 0 \) (favorable to LPs), and red for \( \mathcal{C} \ge 0 \) (unfavorable to LPs). Corollary~\ref{coro2} provides sufficient conditions for LPs to experience non-positive price impact—Regions  VI and II.  Conversely, Corollary~\ref{coro2} also identifies conditions under which price impact is non-negative—Regions VII and III. 
More intricate behavior arises when the price crosses the market value (Theorem~\ref{theorem:price_impact}).  For example, in Region~I (\( q' > p_x/p_y > q \) and \( q' > 1/q \)), price impact is favorable; in Region VIII (\( q' < 1/q \)), it becomes unfavorable. A symmetric situation arises for Regions IV and V depending on whether \( q' < 1/q \). 
\end{example}

Analogous to our analysis of price impact, we can show that the fee function is continuous with respect to the liquidity variable and converges to a finite limit as liquidity grows large. This result is formalized in the following lemma




\begin{lemma}[Effect of Liquidity]\label{lemma:continuity-monotonicity}
Let \(q\) be the current price in a CLMM, where $q \in (t_{m}, t_{m+1}]$. 
Suppose we have a target trade that pays $\Delta x$ $X$ tokens to the pool and asks for $Y$ tokens, which can be supported by the CLMM. 
Immediately before the trade, a JIT LP adds liquidity $L \in [0, \infty)$ to price range $(a, b)$ where $a \le t_m < t_{m+1} \le b$. 
Let $q' (< q)$ be the resulting price after the trade as a function of $L$. 
Then, $q'$ is continuous, strictly decreasing in $L$, and $\lim_{L \to \infty} q' = q$.
\end{lemma}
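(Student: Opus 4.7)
My plan is to characterize $q'$ as the unique solution of an implicit swap-conservation equation parameterized by $L$, and then read off continuity, monotonicity and the limit from structural properties of that equation.

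First I would fix notation. For each tick interval $(t_k, t_{k+1}]$, write $\Pi_k(L)$ for the total active liquidity there; by construction $\Pi_k(L) = P_k + L \cdot \mathbf{1}\bigl[(t_k, t_{k+1}] \subseteq (a,b)\bigr]$, and the hypothesis $a \le t_m < t_{m+1} \le b$ guarantees that the tick containing the pre-trade price satisfies $\Pi_m(L) = P_m + L$. For $r \in (0, q]$, define the cumulative cost function $F(r;L)$ as the amount of token $X$ required to move the pool price from $q$ down to $r$ under liquidity profile $\Pi(L)$. Using the in-tick identity from Eq.~\eqref{eq:position-holdings}, dragging the price within a single tick with liquidity $\Pi$ from $q_1$ down to $q_2 \in (t_k, q_1]$ consumes exactly $\Pi\bigl(1/\sqrt{q_2} - 1/\sqrt{q_1}\bigr)$ tokens of $X$. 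Summing the contributions of the ticks (wholly or partially) traversed between $q$ and $r$ yields a piecewise closed-form for $F(r;L)$.

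I would then record two structural facts: (i) for fixed $L$, $F(\,\cdot\,;L)$ is continuous and strictly decreasing on $(0, q]$ with $F(q;L) = 0$ and $F(r;L) \to \infty$ as $r \to 0^+$, so the equation $F(q';L) = \Delta x$ has a unique root $q'(L) \in (0, q)$ (existence is guaranteed by the hypothesis that the trade is supported); (ii) for fixed $r < q$, $F(r;\,\cdot\,)$ is continuous, piecewise affine, and strictly increasing in $L$, because every JIT-covered tick that the trade traverses (in particular the tick containing $q$) contributes a term whose coefficient is strictly increasing in $L$. Combining (i) and (ii) by a standard implicit-monotonicity argument shows that $q'(L)$ is continuous and strictly monotone in $L$, with $q'(L)$ moving toward $q$ as $L$ grows.

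For the limit, I would observe that since $\Pi_m(L) = P_m + L$, the $X$ required to drain tick $m$ entirely, namely $(P_m + L)\bigl(1/\sqrt{t_m} - 1/\sqrt{q}\bigr)$, exceeds the fixed $\Delta x$ once $L$ is large enough. For such $L$ the in-tick identity
\begin{equation*}
\frac{1}{\sqrt{q'(L)}} - \frac{1}{\sqrt{q}} = \frac{\Delta x}{P_m + L}
\end{equation*}
holds, and sending $L \to \infty$ gives $q'(L) \to q$.

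The main obstacle is handling the piecewise structure of $F(r;L)$: as $L$ varies, the set of ticks actually traversed by the trade changes, producing breakpoints at those $L^\ast$ where $q'(L^\ast)$ coincides with a tick boundary. Verifying that $q'$ is continuous across such breakpoints reduces to checking that the two adjacent closed-form expressions agree at the common tick boundary, which holds by construction of $F$ because $F(\,\cdot\,;L)$ is continuous in its first argument at every tick.
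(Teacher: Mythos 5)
Your proposal is correct and follows essentially the same route as the paper: your cumulative cost function $F(r;L)$ is precisely the paper's conservation identity $\sum_k \phi_k(L)\,\psi_k(q') = \Delta x$, your structural facts (i)--(ii) correspond to the paper's observations on $\phi_k$ and $\psi_k$, and your limiting argument via $1/\sqrt{q'} - 1/\sqrt{q} = \Delta x/(P_m + L)$ is the paper's final step almost verbatim. One small remark: you correctly conclude that $q'$ increases toward $q$ as $L$ grows, which agrees with the paper's own proof and the stated limit, even though the lemma's wording (``strictly decreasing'') has the direction of monotonicity reversed.
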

\begin{proof}[Proof sketch]
(Full proof in Appendix \ref{sec:proofcontinuitymonotonicity})
We model the final price $q'$ after the swap as a function of added liquidity $L$. The swap demand $\Delta x$ must be fulfilled across ticks, so we express $q'$ as an implicit function of $L$. As $L$ increases, the same $\Delta x$ has less price impact, so $q' \uparrow$. The mapping is strictly decreasing in $L$ and continuous due to the functional form of the AMM and the monotonicity of the inverse square root. As $L \to \infty$, $q' \to q$, meaning the AMM price becomes increasingly resistant to change. 
\end{proof}

\begin{remark}
    Note that, since $P_{m} = \sum_{i \in [N]} P_{n,m}$, Lemma~\ref{lemma:continuity-monotonicity}  is also true for any $P_{n,m}$; that is, $  \lim_{P_{n,m} \to \infty} q' = q \; \forall n$ such that \( t_m < q \le t_{m+1} \). This means that it  is sufficient for \textbf{any} provider to inject high amount of liquidity in order to force the limit condition.
    
\end{remark}
Lemma \ref{lemma:continuity-monotonicity} highlights one of the central messages of this work: by concentrating liquidity when and where it is needed, JIT LPs can dampen price volatility, controlling slippage and choosing how the price will change. This empowers the JIT LP to optimize their revenue by balancing earned fees and losses due to price changes (See \S\ref{sec:imperloss}).

\section{Just-In-Time Liquidity Allocation: \\An Optimization Perspective} \label{sec:jit-optimization}


We now define the formal structure of the decision process faced by a JIT LP observing a pending transaction. 
We begin by explaining the relevant properties of JIT LPs (\S\ref{sec:back-jit}), which inform our model, including the strategy space available to the JIT LP (\S\ref{sec:stratsp}), and the utility function that governs decision-making. Finally, we  establish conditions under which an optimal strategy exists 
(\S\ref{sec:utilfun}).

\subsection{Background on Just-In-Time LPs}
\label{sec:back-jit}
Unlike  ordinary LPs who passively leave their liquidity in the liquidity pool, Just-In-Time (JIT) LPs react to market conditions in real-time to extract profit \cite{JIT}. We next present two important aspects of JIT LP behavior and explain how we model them.

\vspace{0.1in} \noindent \textbf{(1) Per-transaction optimization.~~} JIT LPs operate by monitoring the public \emph{mempool}\footnote{The mempool  is a public set of pending transactions propagated through the blockchain peer-to-peer network.} \cite{wang2024privateorderflowsbuilder, yukselretrospective} for trade opportunities.
This allows JIT LPs to selectively decide whether to provide liquidity for each trade before the trade is validated and finalized on the blockchain. 

Once a JIT LP identifies a favorable trade, it can submit a \textit{bundle}, which is a set of one or more transactions submitted together for inclusion in a block, typically via a relay to a block builder (e.g., via FlashBots \cite{Flashbots}). These bundles are guaranteed atomic execution: the entire bundle either succeeds or fails as a group, ensuring riskless, single-trade provisioning.
Exploiting this atomicity, the JIT LP can execute various operations-e.g., \emph{sandwich attacks}, which mint liquidity just before a swap and burn it immediately after \cite{cryptoeprint:2023/973}.  

\textit{Modeling Implications:}
JIT LPs exploit this information asymmetry to compute their expected utility for a given trade.
Based on this foresight, they can make optimal, \emph{transaction-level} decisions about whether to mint and burn liquidity, factoring in both trading fees and the impact of price movements.

This naturally induces a two-stage optimization framework: first, passive LPs select their liquidity allocation strategy for a longer period \cite{tang2024game}; second, for each observed trade, JIT LPs react by solving an online optimization problem to maximize net expected revenue, potentially adjusting for inclusion costs such as gas fees or auction bids.
Our model focuses on this second optimization.

\vspace{0.1in} \noindent \textbf{(2) Competition among JIT LPs.~~}
With multiple JIT LPs potentially competing over the same trade, each constructs and submits a bundle. 
The bundle includes a tip to the block builder, 
who selects bundles to maximize the total value extractable from the block by adjusting transaction ordering, inserting additional transactions, or censoring transactions. 

This creates a sealed-bid auction: each JIT LP submits a bundle with a bid, without visibility into others’ bids, and the builder selects the one that maximizes their expected revenue, i.e., only the highest-bidding bundle is included. The winning JIT LP earns a portion of the fees and pays a tip bounded below by the second-highest bid. This mechanism forms the basis of our transaction-level utility model, where we analyze how JIT LPs optimize their strategy under fixed surplus and uncertain inclusion.

\textit{Modeling Implications:}  Modeling this, recall that for a swap of $\Delta x$ tokens of $X$ and fee rate $\alpha$, 
 the total fee generated by the swap is \( \delta = \alpha \cdot p_x \cdot \Delta x \), where \( p_x \) is the market price of token \( X \) in dollars. Since both \( \Delta x \) and \( p_x \) are visible in the mempool, \( \delta \) is known to all JIT LPs in advance. As a result, they compete over a fixed surplus, strategically offering portions of \( \delta \) as tips to the block builder to secure inclusion.

\subsection{Model}
Formally, we represent the JIT optimization problem as a tuple \( ([N], \mathcal{U}, \mathcal{S}, \theta, \rho) \).
Here, \([N]\) denotes the set of all passive liquidity providers in the pool. The strategy space \(\mathcal{S}\) encapsulates all feasible actions available to the JIT LP, while \(\mathcal{U}\) is the utility function mapping each strategy to a real-valued profit—computed as the difference between earned fees, incurred price impact, and bidding costs. The vector \(\theta\) includes all relevant swap parameters, such as trade size, fee rate, price information, and the liquidity distribution of passive providers. Finally, $\rho$ is a constant that  denotes the JIT budget.


Note that the JIT optimization problem assumes that the passive LP positions are established a priori. We denote by  \(\boldsymbol{s}\)   the strategy profile of passive LPs, encoded as a vector of liquidity allocations across ticks: \(\boldsymbol{s} = \left(L_i, a_i, b_i\right)_{i \in [N] }\). 
The aggregate effect of the passive LP strategies \(\boldsymbol{s}\) induces a liquidity distribution function \(P = (P_m)_{m \in M}\), where each \(P_m\) denotes the total passive liquidity available at tick \(m\). Formally, \(P\) is derived via a deterministic mapping from $\boldsymbol{s}$.
In the remainder of this section, we use \(\boldsymbol{s}\) to refer to the underlying strategies and \(P\) to denote the induced state of the pool.  


\subsubsection{Strategy Space \(\mathcal{S}\)} \label{sec:stratsp}

A JIT LP’s choice of action consists in selecting one position\footnote{We do not consider multiple-position actions for two main reasons: 1) it is commonly assumed in the literature for JIT to mint constant liquidity \cite{cryptoeprint:2023/973}, and 2) the data provided in \cite{tang2024game} reveals that  transactions from 2024 to 2025 by JIT LPs include only a single position.} \((L, a, b)\) to be minted immediately before a swap and burned immediately after.  Since the entire state of the pool is observable at decision time, the JIT LP can make an informed choice based on current market and pool conditions. To formalize this, we define the swap as a 6-tuple \(\theta \triangleq (\Delta x, q, P, p_x, p_y, \alpha)\), where:
\begin{itemize}
    \item \(\Delta x\) is the amount of token \(X\) being exchanged by the trader in the target trade;
    \item \(q\) is the current pool price;
    \item \(P = (P_m)_{m \in M}\) is the per-tick liquidity distribution from passive LPs;
    \item \(p_x\) and \(p_y\) are the external market prices (in dollars) of tokens \(X\) and \(Y\), respectively;
    \item \(\alpha\) is the pool's fee rate.
\end{itemize}

Let $q^*$ denote the price after the swap without the presence of JIT LPs. 
We define   \newline
\noindent
\scalebox{0.85}{%
  \parbox{\linewidth}{%
 \quad  \[ \quad
    \mathcal{R}^{(q, q^*)} \triangleq 
    \mathcal{R} \cap \Big\{ (a, b) \Big| 
    \forall m: a \le t_m < t_{m+1} \le b, ~ 
    [t_m, t_{m+1}] \cap (\min\{q, q^*\}, \max\{q, q^*\}) \neq \varnothing 
    \Big\}.
  \]
  }%
}

Intuitively, $\mathcal{R}^{(q, q^*)}$ includes all feasible price ranges that do not contain a subrange $[t_m, t_{m+1}]$ that is never ``touched'' when price moves from $q$ to $q^*$. 
We claim that the JIT LP only considers price ranges in $\mathcal{R}^{(q, q^*)}$ for their choice of liquidity position. Otherwise, 
we can break the position $(L, a, b)$ down to two positions $(L, a, c)$ and $(L, c, b)$ where one of their ranges does not intersect with the price moving range $(\min\{q, q^*\}, \max\{q, q^*\})$. Let it be $(a, c)$ without loss of generality \cite{tang2024game}. 
During the trade, $(L, a, c)$ will incur fee income $\mathcal{F} = 0$ and zero token amount change, which further implies price impact $\mathcal{C} = 0$ since $p_x$ and $p_y$ are both constant. Hence, in comparison with an action that chooses $(L, c, b)$ instead, the $(L, a, b)$ has equal utility impact while wasting budget on $(a, c)$. 
 Accordingly, we define the JIT strategy space as:
\[
\mathcal{S} = \left\{ (L, a, b) \;\middle|\;
(a,b) \in \mathcal{R}^{(q, q^*)},\;
L \cdot V_{(a,b)} \le \rho,\;
q \in [a,b]
\right\} \cup \{ \perp \},
\]
where \( \perp \) denotes non-participation, $\rho$ denotes the LP's budget, and \( V_{(a,b)} \) is the dollar cost per unit of liquidity at price $q$ over the interval \( (a,b) \).

\subsubsection{Utility Function} \label{sec:utilfun}

The objective of the JIT LP is to maximize a utility function \( \mathcal{U} \), defined as the net profit obtained from participating in a given swap. This profit captures the trade-off between fees earned and the price impact incurred. Given a strategy \( s = (L, a, b) \in \mathcal{S} \) and swap parameters \( \theta \), 
the utility is defined as:
\begin{equation}
    \mathcal{U}(s; \theta) = \mathcal{F}(s; \theta) - \mathcal{C}(s; \theta),
\end{equation}
where \( \mathcal{F}(s; \theta) \) is the total fee income and \( \mathcal{C}(s; \theta) \) denotes the price impact associated with the strategy. The JIT LP aims to solve:
\begin{equation}
    s^* = \arg \max_{s \in \mathcal{S}} ~\mathcal{U}(s; \theta).
\end{equation}

In practice, a JIT LP needs to additionally pay a cost $v$ for placing bids in auctions (such as Flashbots~\cite{Flashbots}) to land the sandwich attack on the trade transaction. 
Ideally, we should incorporate the auction in our model. Two natural approaches are (1) to model the game among multiple JIT LPs, or (2) model the probability of winning the bid as a function of $v$. 
Both approaches require empirical study on historical auctions that involves not only auction winners, but more importantly, losers. 
Such data is not currently accessible, to the best of our knowledge, and it is unrealistic to make ad hoc assumptions.
Hence, for simplicity, we assume $v$ to be a constant that does not relate to the choice of liquidity position. As a result, we can easily extend our model by updating the budget constraint as  \( L \cdot V_{(a,b)} + v \le \rho \) and the utility function as \( \mathcal{U}(s; \theta) = \mathcal{F}(s; \theta) - \mathcal{C}(s; \theta) - v \). 

Given a specific strategy \( s \in \mathcal{S} \) and the current swap context, we can compute the total utility associated with executing the strategy. From this point forward, we omit explicit dependence on \( \theta \) and \( s \) whenever the context is clear.

The total fee earned by the JIT LP is decomposed tick-wise. Since the JIT LP deploys a single position \( (L, a, b) \), let $L$ be the liquidity of the JIT on tick $m$.  The JIT’s share of the fees at tick \( m \in M \) is:
\begin{equation}
\mathcal{F}_m \triangleq \underbrace{\Delta x_m \alpha p_x}_{\text{Fee }=\delta_m} \cdot \frac{L}{\sum_{i \in [N]} P_{i,m} + L}, \quad \quad
\mathcal{F} \triangleq \sum_{m \in M} \mathcal{F}_m.
\end{equation}

 Let \( t_m \) denote the price at tick \( m \). Recall that \(\alpha, p_x\) are trade parameters, and that \( \Delta x_m\) depends on \(q'\) and denotes the amount of token \(X\)   that is injected at each tick \(m\) (see~\S\ref{sec:fees}). As  the swap affects only the ticks crossed during execution, the tick-level fee \( \delta_m \) is nonzero only for ticks \( m \) such that \( t_m \in (q, q') \). Under these conditions, the fee allocation simplifies to:
\begin{equation} \label{eq:feeTick}
\mathcal{F}_m = \delta_m \cdot \frac{L}{\sum_{i \in [N]} P_{i,m} + L}, \quad \quad
\mathcal{F} = \sum_{m \in M \,\mid\, t_m \in (a,b)} \mathcal{F}_m.
\end{equation}
From it, we can show the following lemma:

\begin{lemma}\label{lemma:limit-fee}
Consider an LP $n$ with a single position $(L,a,b)$, such that \(q \in (a, b)\). Let there be a swap that pays \( \Delta x > 0 \) $X$ tokens. The total fee earned by the provider is given by:
\[
\mathcal{F}(L) = \sum_{\forall m| t_m \in (a,b)} \delta_m \cdot \frac{L}{L + {\tilde{P}_{n,m}}},
\]
where \( \tilde{P}_{n,m} = \sum_{i \not= n } P_{i,m}\) is the total liquidity in tick $m$ without $ L = P_{n,m} $. In addition, \( \mathcal{F}(L) \) is continuous in \( L \) and is bounded above by \( \alpha  \cdot \Delta x \),
\begin{equation}
    \lim_{L \rightarrow \infty} \mathcal{F}(L) \leq \alpha \cdot \Delta x. 
\end{equation}
\end{lemma}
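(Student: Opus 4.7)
The plan is to reduce the three assertions (continuity, the uniform upper bound, and the large-$L$ limit) to two inputs: (i) the post-swap price $q'$ is a continuous function of $L$, established in Lemma~\ref{lemma:continuity-monotonicity}; and (ii) the tickwise amounts $\Delta x_m$ always sum to the trader's fixed input $\Delta x$, since the JIT's mint-then-burn action does not change what the swap demands from the pool.

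For continuity I would show that each summand of
$\mathcal{F}(L) = \sum_{m:\, t_m \in (a,b)} \delta_m(L) \cdot L/(L + \tilde{P}_{n,m})$
is continuous in $L \ge 0$. The share factor $L/(L + \tilde{P}_{n,m})$ is continuous since $\tilde{P}_{n,m} \ge 0$. The fee factor $\delta_m = \alpha\, p_x\, \Delta x_m$ depends on $L$ only through the post-swap price $q'(L)$, and $\Delta x_m$ is a continuous function of $q'$ via the projection $\hat q_{a,b}$ appearing in Eq.~\eqref{eq:position-holdings}; composing with the continuous map $L \mapsto q'$ from Lemma~\ref{lemma:continuity-monotonicity} yields continuity of $\delta_m(L)$. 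A finite sum of continuous functions is continuous.

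For the upper bound, input conservation gives $\sum_m \Delta x_m = \Delta x$ and hence $\sum_m \delta_m = \alpha\, p_x\, \Delta x$ (i.e., the total fee generated by the swap). Combining this with the trivial bound $L/(L + \tilde P_{n,m}) \le 1$ on each tick proves the stated cap. For the limit, Lemma~\ref{lemma:continuity-monotonicity} gives $q'(L) \to q$ as $L \to \infty$, so the set of ticks with nonzero $\delta_m$ collapses to the tick interval containing $q$, while on that interval the share factor tends to $1$. Since every partial sum is already bounded by the same cap, passing to the limit in the finite sum yields the claimed inequality.

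The main obstacle is handling tick-boundary transitions inside the continuity step: as $L$ varies and $q'(L)$ sweeps across a tick $t_{m'}$, a new summand enters the sum $\mathcal{F}(L)$. I would verify that the contribution joins continuously because $\Delta x_{m'}$ vanishes precisely when $q' = t_{m'}$, which follows directly from the $\hat q_{a,b}$ projection and the fact that $\Delta x_m$ is computed as a difference evaluated at tick endpoints clipped to $(\min\{q,q'\}, \max\{q,q'\})$. Once the continuity step is in place, the upper bound and the limit are short consequences of swap-input conservation and Lemma~\ref{lemma:continuity-monotonicity}.
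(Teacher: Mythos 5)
Your proposal is correct and follows essentially the same route as the paper's proof: continuity is obtained by composing the continuity of $\Delta x_m$ in $q'$ with the continuity of $L \mapsto q'$ from Lemma~\ref{lemma:continuity-monotonicity}, and the bound follows from conservation of the swap input across ticks together with the share factor being at most one. If anything, you are more thorough than the paper's appendix argument, which only treats continuity explicitly and leaves the upper bound and the tick-boundary transitions (where a new summand enters with value zero) at the level of the proof sketch.
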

\begin{proof}[Proof sketch]
(Full proof in Appendix \ref{sec:limitfeeproof})
We express the total fee earned by the LP as a sum over ticks, weighted by the LP’s share of liquidity \eqref{eq:feeTick}. Since each fee component depends continuously on $L$ through the liquidity share, the full fee function is continuous in $L$. As $L \to \infty$, the LP's share of each tick approaches 1, but the total fee is still bounded above by $\alpha \cdot \Delta x$, since that is the maximum total fee generated by the swap. Hence, the limit exists and is bounded.
\end{proof}

The price impact \( \mathcal{C} \) is likewise decomposed across ticks as:
\begin{equation}
\mathcal{C}_m \triangleq p_x \cdot \left( x(\hat{q}_m) - x(\hat{q}_m') \right) + p_y \cdot \left( y(\hat{q}_m) - y(\hat{q}_m') \right), \quad 
\mathcal{C} \triangleq \sum_{m \in M \,\mid\, t_m \in (a,b)} \mathcal{C}_m,
\end{equation}
where \( \hat{q}_m = \min\{t_{m+1}, \max\{q, t_m\}\} \) is the entry price, \( \hat{q}_m' \) the exit price within tick \( m \) and \(x(q)\),\;\(y(q)\) are given by Eq.~\eqref{eq:position-holdings}.

This decomposition allows the utility to be expressed in tick-wise form:
\begin{equation}\label{tickUtility}
    \mathcal{U}_m = \mathcal{F}_m - \mathcal{C}_m, \quad 
    \mathcal{U} = \sum_{m \in M \,\mid\, t_m \in (a,b)} \mathcal{U}_m.
\end{equation}

Recall that all parameters \(\theta \triangleq (\Delta x, q, P, p_x, p_y, \alpha)\) are fixed by the given trade, and since \(q'\) depends on \((L,a,b)\), the only free variables are \(L,(a,b)\).
Using Lemmas~\ref{lemma:continuity-monotonicity}, \ref{lemma:limit_price_impact}, and \ref{lemma:limit-fee}, we establish the following theorem:

\begin{theorem}\label{theorem:utility_opt}
    The utility function \( \mathcal{U} \), as defined in~Eq.~\eqref{tickUtility}, attains a global maximum over the strategy space \( \mathcal{S} \). That is, there exists an optimal strategy \( s^* \in \mathcal{S} \) such that \( \mathcal{U}(s^*) \ge \mathcal{U}(s) \) for all \( s \in \mathcal{S} \).
\end{theorem}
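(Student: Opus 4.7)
The plan is to exploit a structural decomposition of $\mathcal{S}$. Because the tick set $T$ is finite, the family $\mathcal{R}^{(q,q^*)}$ is a finite collection of admissible ranges, and for each fixed $(a,b) \in \mathcal{R}^{(q,q^*)}$ the budget constraint $L \cdot V_{(a,b)} \le \rho$ confines $L$ to the compact interval $[0, \rho/V_{(a,b)}]$. Thus $\mathcal{S}$ is a finite union of compact line segments, together with the isolated non-participation point $\perp$. The argument reduces to proving that $\mathcal{U}(\cdot,a,b)$ is continuous in $L$ on each such segment: Weierstrass' extreme value theorem then yields a maximum on each piece, and taking the largest over this finite collection (compared against $\mathcal{U}(\perp) = 0$) produces the desired global maximizer $s^*$.

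First I would fix $(a,b)$ and view $\mathcal{U}(L,a,b) = \mathcal{F}(L) - \mathcal{C}(L)$ as a function of $L$ alone on the compact interval $[0, \rho/V_{(a,b)}]$. By Lemma~\ref{lemma:continuity-monotonicity}, the post-swap price $q'$ is continuous (and strictly monotone) in $L$. Each per-tick fee $\delta_m$ is a continuous function of $q'$ through the explicit formula for $\Delta x_m$ recalled in \S\ref{sec:fees}, and the weighting factor $L/(L + \tilde{P}_m)$ is continuous in $L$; combining these in the finite sum of Lemma~\ref{lemma:limit-fee} gives continuity of $\mathcal{F}$. Analogously, each $\mathcal{C}_m$ is a continuous function of the boundary projections $\hat{q}_m, \hat{q}_m'$ via Eq.~\eqref{eq:position-holdings}, and these projections are continuous in $q'$, hence in $L$. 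So $\mathcal{U}(\cdot,a,b)$ is continuous on a compact domain.

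The main obstacle I anticipate is verifying continuity across tick boundaries: as $L$ varies, $q'$ may cross a tick $t_m$, which potentially alters the index set of the summations defining $\mathcal{F}$ and $\mathcal{C}$. The key observation is that at the precise value of $L$ for which $q' = t_m$, the newly included (or excluded) term contributes zero, since $\Delta x_m = 0$ and $\hat{q}_m = \hat{q}_m'$ at that boundary; hence the one-sided limits of the sums agree and continuity is preserved. A parallel check is needed at the endpoint $L = 0$, where both $\mathcal{F}$ and $\mathcal{C}$ vanish in accordance with Lemmas~\ref{lemma:limit_price_impact} and~\ref{lemma:limit-fee}, matching the natural boundary behavior.

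With continuity established, Weierstrass guarantees a maximizer $L^*_{(a,b)} \in [0, \rho/V_{(a,b)}]$ for each admissible $(a,b)$. Finiteness of $\mathcal{R}^{(q,q^*)}$ then lets us select
\[
s^* \in \arg\max\Bigl\{\mathcal{U}(\perp),\; \max_{(a,b) \in \mathcal{R}^{(q,q^*)}} \mathcal{U}(L^*_{(a,b)}, a, b)\Bigr\},
\]
which exists because it is the maximum of a finite set of real numbers, completing the proof.
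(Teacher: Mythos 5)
Your proposal is correct and follows essentially the same route as the paper's proof: decompose $\mathcal{S}$ into finitely many ranges $(a,b)$, restrict $L$ to the compact budget interval, establish continuity of $\mathcal{F}$ and $\mathcal{C}$ in $L$ via the continuity of $q'$, and apply the extreme value theorem before maximizing over the finite collection. Your explicit verification of continuity across tick boundaries (the vanishing of the newly included term when $q'$ hits a tick) is a detail the paper's proof leaves implicit, but it does not change the underlying argument.
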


\begin{proof}[Proof sketch]
(Full proof in Appendix \ref{sec:prooflasttheo}) The utility function $\mathcal{U}$ depends on liquidity $L$ and the chosen price range $(a,b)$. Since tick prices are discrete, there are finitely many such intervals. For a fixed interval, $\mathcal{U}(L)$ is continuous due to the continuity of both fee ($\mathcal{F}$) and cost ($\mathcal{C}$) functions in $L$. Also, $\mathcal{F}$ is bounded above while $\mathcal{C}$ grows at most linearly in $L$, constrained by a budget $\rho$. Therefore, the domain of feasible $L$ is compact, and $\mathcal{U}$ attains a maximum on it. Optimizing over all intervals proves existence of an optimal strategy.
 
\end{proof}
This guarantees that for any swap, there exists a profit-maximizing strategy for the JIT LP. In the following section, we propose an algorithm to compute this strategy and empirically explore how it behaves across different swap conditions.

\section{An Algorithmic Solution to the JIT \\ Optimization Problem} \label{sec:algo}

We now present an algorithmic approach for computing the optimal strategy \( s^* = (L^*, a^*, b^*) \) that maximizes the utility \( \mathcal{U} \) of a JIT LP in response to a given target swap. This is made possible by Theorem~\ref{theorem:utility_opt}, which guarantees the existence of a global optimum.

The utility function can be challenging to optimize since there are many parameters and the utility function is nonconcave in general.
In addition, we lack  a closed-form formula for \(q'\), which is calculated algorithmically. 
To address these challenges, we use empirical observations to reduce the search space.
First, we note that $\theta$, the fixed trade parameters, are known beforehand by the JIT trader; this turns the utility into a function of three variables: \(a,b,L\).



In theory, the JIT trader could choose any $(a,b)$ interval, of which there are $O(M^2)$. 
However, due to the monotonicity of $q'$ in the liquidity, 
the original price $q$ cannot move farther than \( q^* \), where $q^*$ is the anticipated post-swap price in the \emph{absence} of JIT intervention. 
Hence, we limit ourselves to tick ranges \( (a, b) \in \mathcal{R}^{(q, q^*)} \). 
While this does not reduce the asymptotic worst-case complexity of our search, in our data  we observe that for 95\% of transactions, there are at most $2$ ticks between $q$ and $q^*$ (see \S\ref{sec:data-analysis}). Hence, enumerating the intervals in \(\mathcal{R}^{(q, q^*)} \) is practically feasible.

\begin{algorithm}[t]
\begingroup
\caption{JIT Optimal Strategy Search}
\label{alg:jit-strategy}
\begin{algorithmic}[1]
\Require Swap parameters \( \theta = (\Delta x, q, P, p_x, p_y, \alpha) \)\; Budget \( \rho \)
\Ensure Optimal strategy \( s^* = (L^*, a^*, b^*) \)

\State Initialize \( \mathcal{U}_{\max} \gets -\infty \) \label{line:init-utility}
\State Set \( s^* \gets \perp \) \label{line:init-strategy}
\For{each \( (a, b) \in \mathcal{R}^{(q, q^*)} \)} \label{line:loop-ranges}
    \State Solve for optimal liquidity: \label{line:solve-L}
    $    L^*(a,b;\theta)$ [Optimization \eqref{eq:MaxValue}]
    \State Evaluate resulting utility: \label{line:evaluate-U}
    $
    \mathcal{U} \gets \mathcal{F}^{(a,b)}(L^*;\theta) - \mathcal{C}^{(a,b)}(L^*;\theta)
    $
    \If{ \( \mathcal{U} > \mathcal{U}_{\max} \) } \label{line:check-best}
        \State Update best utility: \( \mathcal{U}_{\max} \gets \mathcal{U} \) \label{line:update-best-U}
        \State Update best strategy: \( s^* \gets (L^*, a, b) \) \label{line:update-best-s}
    \EndIf
\EndFor
\State \Return \( s^* \) \label{line:return}
\end{algorithmic}
\endgroup
\end{algorithm}

Algorithm~\ref{alg:jit-strategy} provides a computational method to search for the optimal JIT  liquidity provisioning strategy in response to an observed swap. 
The algorithm iterates over all candidate price intervals \( (a, b) \in \mathcal{R}^{(q, q^*)} \) (line~\ref{line:loop-ranges}).
For each interval $(a,b)$, it solves a one-dimensional, non-linear, non-concave 
 optimization problem to identify the liquidity level \( L^* \) that maximizes the JIT LP’s utility:
 
\begin{align}
\begin{aligned}
\label{eq:MaxValue}
L^*(a,b;\theta) = \arg\max_{L} \quad &\left(\mathcal{F}^{(a,b)}(L;\theta) - \mathcal{C}^{(a,b)}(L;\theta)\right) \\
  \textrm{s.t.} \quad & \underbrace{p_x x^{(a,b)}(q') + p_y y^{(a,b)}(q') \le \rho}_{\text{Budget Constraint}}
\end{aligned}
\end{align}
where the utility and cost are defined as
\small
\begin{align}
\mathcal{F}^{(a,b)}(L;\theta) &= \sum_{m \in M \mid t_m \in (a,b)} \Delta x_m \cdot \alpha \cdot \frac{L}{L + P_m} \cdot p_x \\
\mathcal{C}^{(a,b)}(L;\theta) &= \sum_{m \in M \mid t_m \in (a,b)} \left( -p_x \Delta x_m + p_y \Delta x_m \hat{q}_m \cdot \frac{L+P_m}{\Delta x_m \sqrt{\hat{q}_m} + L + P_m} \right) \cdot \frac{L}{L+P_m}.
\end{align}
\normalsize

If the resulting utility exceeds the current maximum, the optimal strategy is updated accordingly (lines~\ref{line:check-best}–\ref{line:update-best-s}). After examining all admissible ranges, the algorithm returns the best strategy \( s^* = (L^*, a^*, b^*) \) (line~\ref{line:return}) that maximizes the JIT LP’s profit in the given swap scenario. 
 The optimal liquidity pair \( (a^*, b^*) \) and the corresponding optimal liquidity \( L^* \) are given by
$L^* = \arg \max_L   \mathcal{F}^{(a^*,b^*)}(L;\theta) - \mathcal{C}^{(a^*,b^*)}(L;\theta)$ such that \( \mathcal{U}^{(a^*,b^*)}(L^*;\theta) = \mathcal{U}(s^*;\theta) \ge \mathcal{U}(s;\theta) \; \forall \; s\in \mathcal{S}\), where  \( s^* \) is given by  the triplet $s^* = (L^*, a^*, b^*)$ (see \S\ref{sec:stratsp}).

  As long as we can  find the global optimum of a bounded continuous univariate function, 
  Algorithm~\ref{alg:jit-strategy} returns an optimal strategy to maximize JIT utility.
  In our experiments, we tried both Particle Swarm Optimization (PSO) and binary search to solve for $L^*$. Since in most cases the optimal solution was to use the entire budget, both algorithms found the best strategy, but in non-trivial cases, PSO performed better at the cost of more computational power. 
  Our implementation of Algorithm \ref{alg:jit-strategy} can be found on \hyperlink{https://github.com/brunoCCOS/JITUniswapOptimization}{GitHub}.\footnote{https://github.com/brunoCCOS/JITUniswapOptimization} 
  



\begin{remark}[Token Directionality]\label{rmk:direction}
All derivations in this section assume \( \Delta x > 0 \), i.e., token \( X \) is being sold. To handle the symmetric case \( \Delta y > 0 \), where token \( Y \) is sold instead, we define a transformation $
Y(\theta) = (\Delta y, 1/q, P, p_y, p_x, \alpha)$. This mapping ensures that all formulas and results continue to hold under the appropriate relabeling of tokens.
\end{remark}


\subsection{Strategic Archetypes in CLMM Swaps}
\label{sec:types-of-trades}

We classify trades into three canonical scenarios—referred to as \textit{strategic \\ archetypes}—that capture the dominant patterns and analytical results of Theorems~\ref{theorem:price_impact} and~\ref{theorem:utility_opt}. See Appendix~\ref{app:images} for actual examples of each archetype.

\begin{itemize}
    \item \textbf{Overpriced trade:} The trade moves the pool price away from the market price, resulting in the purchase of the more expensive token. Formally, this occurs when $q' < q \le p_x/p_y$ or $q' > q \ge p_x/p_y$. According to Corollary~\ref{coro2}, such movements are associated with non-positive price impact, meaning the LP benefits (\Cref{fig:vertical_combined} (a1)). 

    \item \textbf{Arbitrageur trade:} The trade brings the pool price closer to the market price by acquiring the cheaper token. This corresponds to $q < q' \le p_x/p_y$ or $q > q' \ge p_x/p_y$. As shown in Corollary~\ref{coro2}, this condition implies positive price impact, i.e., a loss for LPs. Although this trade is optimal for arbitrageurs, it is typically unprofitable for JIT LPs unless the fees compensate for the loss (\Cref{fig:vertical_combined} (a2)). 

    \item \textbf{Overshoot trade:} A corrective trade  crosses the market price, starting by buying the cheaper token but pushing the pool price past the market value, ultimately overpaying. This configuration—$q < p_x/p_y < q'$ or $q > p_x/p_y > q'$—lies on the boundary conditions specified in Theorem~\ref{theorem:price_impact}. In this case, whether the price impact is favorable or not depends on whether the post-trade price $q'$ crosses the threshold condition in \eqref{ineq1} or \eqref{ineq2} (\Cref{fig:vertical_combined} (a3)). 
\end{itemize}


Recall that $q$ is the amount of token $Y$ per unit of token $X$, i.e., the pool price of token $X$ denominated in token $Y$; equivalently, $1/q$ is the pool price of token $Y$ denominated in token $X$. When $q > p_x/p_y$, token $X$ is overpriced in the pool relative to the market, implying that token $Y$ is relatively cheaper on-chain.


\begin{figure}[t]
\noindent
\begin{minipage}[t]{0.20\textwidth}
\centering
\begin{tikzpicture}[xscale=1.5, yscale=1.3]

\def\xone{-0.4}
\def\xtwo{0.0}
\def\xthree{0.4}

\def\q{0.65}
\node[left=3pt] at (\xone,\q) {\(q\)};
\def\p{1.6}
\node[left=3pt] at (\xone,\p) {\(p\)};

\def\qprimeOne{2.2}   
\def\qprimeTwo{1.2}   
\def\qprimeThree{0.2} 

\foreach \y/\lbl in {0/t_m,1/t_{m+1},2/t_{m+2},3/t_{m+3}} {
    \draw[thick] (\xone-0.1,\y) -- (\xone+0.1,\y);
    \draw[thick] (\xtwo-0.1,\y) -- (\xtwo+0.1,\y);
    \draw[thick] (\xthree-0.1,\y) -- (\xthree+0.1,\y);
    \node[left=3pt] at (\xone-0.2,\y) {\(\lbl\)};
}

\newcommand{\drawvertline}[4]{
    \def\x{#1}
    \def\qprime{#2}
    \def\colorqprime{#3}
    \def\number{#4}

    \draw[thick] (\x,-.2) -- (\x,3.2);

    \filldraw[black] (\x,\q) circle (1.2pt);
    \draw[dashed] (\xone,\q) -- (\xthree,\q);
    
    \filldraw[black] (\x,\p) circle (1.2pt);
    \draw[dashed] (\xone,\p) -- (\xthree,\p);

    \filldraw[\colorqprime] (\x,\qprime) circle (1.2pt) node[right=3pt] {$q'$};
    
    \node[left=3pt] at (\x+0.2,3.4) {\(\number\)};
}

\drawvertline{\xone}{\qprimeThree}{red}{(1)}  
\drawvertline{\xtwo}{\qprimeTwo}{red}{(2)}    
\drawvertline{\xthree}{\qprimeOne}{red}{(3)} 

\end{tikzpicture}
\end{minipage}
\hspace{0.03\textwidth}
\begin{minipage}[t]{0.20\textwidth}
\centering
\begin{tikzpicture}[xscale=1.7, yscale=1.6]

\def\L{1.5}
\def\q{0.65}

\foreach \x/\lbl in {0/t_m,1/t_{m+1},2/t_{m+2},3/t_{m+3},4/t_{m+4},5/t_{m+5}} {
    \draw (\x,0.1) -- (\x,-0.1);
    \node at (\x,-0.25) {\(\lbl\)};
}

\coordinate (q) at (\q,0);
\coordinate (p) at (1.5,0);
\coordinate (qp) at (2,\L);
\coordinate (qprime) at (3.5,0);       
\coordinate (qprimeL) at (3,\L);
\coordinate (endActual) at (4,\L);

\fill[blue!20] (2,0) -- (2,\L) -- (4,\L) -- (4,0) -- cycle;
\fill[red!20, pattern=north east lines, pattern color=red] 
    (1.5,0) -- (1.5,\L) -- (3.5,\L) -- (3.5,0) -- cycle;

\draw[thick] (\q,0) -- (2,0);
\draw[dashed] (1.5,0) -- (1.5,\L);
\draw[dashed] (3.5,\L) -- (3.5,0);
\draw[dashed] (1.5,\L) -- (3,\L);
\draw[thick] (0,0) -- (5,0); 

\draw[thick] (2,0) -- (2,\L);
\draw[thick] (2,\L) -- (4,\L);
\draw[thick] (4,0) -- (4,\L);

\filldraw[black] (q) circle (1.5pt) node[below=3pt] {$q$};
\filldraw[black] (p) circle (1.5pt) node[below=3pt] {$p$};
\filldraw[black] (qprime) circle (1.5pt) node[below=3pt] {$q'$};

\node at (1,0.3) {$L=0$};
\node at (2.5,\L+0.3) {$L=L^*$};

\begin{scope}[shift={(3,-0.2)}]
    \draw[black] (0,2) rectangle (2.3,2.9);
    \fill[blue!20] (0.1,2.6) rectangle (0.4,2.8);
    \node[anchor=west] at (0.5,2.7) {Actual Allocation};

    \fill[red!20, pattern=north east lines, pattern color=red] (0.1,2.3) rectangle (0.4,2.5);
    \node[anchor=west] at (0.5,2.4) {Ideal Allocation};
\end{scope}
\end{tikzpicture}
\end{minipage}
\caption{
(a) Illustration of the three kind of trades. 
Each vertical line represents a price axis with different positions for \(q, p, q'\) which fully characterizes each type of trade. (1)  \textbf{Overpriced} trade:  \(q'\) is further away from \(p\) than \(q\). (2) \textbf{Arbitrageur} trade: new price $q'$ is closer to the market price. (3) \textbf{Overshoot} trade. 
(b) Liquidity distribution across ticks as a function of price. 
In this overshoot trade example, the optimal strategy is  to allocate the entire budget immediately after the pool price crosses the market price. Formally, all liquidity should be placed within the interval \((p, q')\), where \(p = p_x / p_y\), as price impact is non-positive in this region. Since liquidity can only be minted at discrete tick levels, the optimal placement corresponds to the nearest available ticks—in this case, \((t_{m+2}, t_{m+4})\). This behavior extends to both overpriced and arbitrage trades, depending on tick granularity and pool conditions.} \label{fig:liq}
\label{fig:vertical_combined}
\end{figure}
\subsection{Optimized Investment Strategies by Archetype}

Among all archetypes, participating in \emph{overpriced trades} yields the highest utility. These trades allow JIT LPs to sell tokens above market value across the entire chosen range $(a,b)$. 

In \emph{overshoot trades}, JIT LPs can profit by targeting only the favorable segment—i.e., by minting liquidity only in the price range where $\Delta x < 0$ and $q < p_x/p_y < q'$. By concentrating liquidity narrowly in this segment, JIT LPs increase their share of fees while avoiding loss-inducing zones.  Figure~\ref{fig:liq}(b) gives an illustrative example; the initial and final price sandwich the market price \( q < p_x/p_y < q'\), so the trade is ``overshooting'' the price. 
Therefore, ticks between \((q, p_x/p_y\)) belong to the converging part of the trade and have positive price impact, i.e., they are loss-inducing for LPs. The best strategy is to mint liquidity only on feasible ranges inside \((p_x/p_y, q')\), where the price has already crossed the market-implied value and the price impact is non-positive, i.e., favorable to LPs.


In contrast, engaging in \emph{arbitrageur trades} is generally unfavorable unless the fee revenue is high enough to offset the incurred price impact. The following result formally establishes a sufficient condition under which fees are insufficient to cover price impact:

\begin{proposition}[Fees Insufficient to Offset Price Impact]\label{prop: feeVSil}
   For a transaction with swap parameters $\theta$ (definition in \S\ref{sec:jit-optimization}), suppose without loss of generality that $\Delta x > 0$. For a price range $(a,b)$, if 
   \[
   1+\alpha < \frac{\Delta y_m  \cdot p_y}{\Delta x_m \cdot p_x}  \quad \forall m \in (p(a), p(b)),
   \]
   i.e., if the gained value of $Y$ tokens over the paid value of $X$ tokens is more than the nominal fee rate,  then utility is always negative: \( \mathcal{U}^{(a,b)} < 0\).
\end{proposition}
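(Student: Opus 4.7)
The plan is to reduce the sign analysis of $\mathcal{U}^{(a,b)}$ to a tick-by-tick argument using the additive decomposition $\mathcal{U}^{(a,b)} = \sum_{m:\,t_m \in (a,b)} \mathcal{U}_m$ from Eq.~\eqref{tickUtility}. It will suffice to show that each summand $\mathcal{U}_m < 0$ under the stated hypothesis; summing then delivers the claim. The core work is algebraic, once both $\mathcal{F}_m$ and $\mathcal{C}_m$ are rewritten in a common form.

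First, I would express both quantities so they share the JIT's per-tick liquidity share $L/(L+P_m)$ as a common prefactor. Equation~\eqref{eq:feeTick} already gives $\mathcal{F}_m = \alpha\, p_x\, \Delta x_m \cdot L/(L+P_m)$. For the price impact, the key identity is that within tick $m$ the JIT absorbs exactly the fraction $L/(L+P_m)$ of every token flowing into or out of the pool, since every position active at that tick trades against the same bonding curve in proportion to its liquidity. Applied to the position-level expression $\mathcal{C}_m = p_x(x(\hat{q}_m)-x(\hat{q}'_m)) + p_y(y(\hat{q}_m)-y(\hat{q}'_m))$ via Eq.~\eqref{eq:position-holdings}, and using the convention $\Delta x_m, \Delta y_m > 0$ from Remark~\ref{rmk:direction} (the JIT gains $X$ and surrenders $Y$), one obtains
\[
\mathcal{C}_m \;=\; \frac{L}{L+P_m}\bigl(p_y \Delta y_m - p_x \Delta x_m\bigr).
\]

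Substituting into $\mathcal{U}_m = \mathcal{F}_m - \mathcal{C}_m$ and collecting terms yields the compact form
\[
\mathcal{U}_m \;=\; \frac{L}{L+P_m}\Bigl[(1+\alpha)\, p_x \Delta x_m - p_y \Delta y_m\Bigr].
\]
Because $L > 0$ and $P_m \ge 0$, the prefactor is strictly positive, so the sign of $\mathcal{U}_m$ equals the sign of the bracket. The hypothesis $(1+\alpha) < p_y \Delta y_m/(p_x \Delta x_m)$ is precisely the statement that this bracket is negative at every crossed tick $m$ with $t_m \in (p(a), p(b))$. Summing over the finitely many such ticks (at least one of which must contribute; otherwise the JIT position is inactive and the claim is vacuous) gives $\mathcal{U}^{(a,b)} < 0$, as desired.

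The main obstacle is the bookkeeping step that converts the position-level definition of $\mathcal{C}_m$ in terms of $x(\hat{q}_m), y(\hat{q}_m)$ into the trade-level form involving the aggregate per-tick exchanges $\Delta x_m, \Delta y_m$. This requires making explicit the proportionality between the JIT's share of the tokens moved at tick $m$ and its liquidity share $L/(L+P_m)$, a fact that is implicit in the derivation of Eq.~\eqref{eq:feeTick} but must be stated carefully on the price-impact side (in particular, one should verify that the boundary ticks where $\hat{q}_m$ coincides with $q$ or $\hat{q}'_m$ coincides with $q'$ are handled consistently, so that the fractional identity remains valid). Once this identity is in place, everything else is a one-line manipulation, and notably the conclusion is independent of the specific value of $L$, so the inequality $\mathcal{U}^{(a,b)}(L;\theta) < 0$ holds uniformly over every feasible liquidity choice in the given range.
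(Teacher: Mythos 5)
Your proof is correct and follows essentially the same route as the paper's: both decompose $\mathcal{U}^{(a,b)}$ tick-wise, write $\mathcal{F}_m$ and $\mathcal{C}_m$ with the common prefactor $\tfrac{L}{L+P_m}$ (the paper imports the identity $\mathcal{C}_m = \tfrac{L}{L+P_m}\left(p_y\Delta y_m - p_x\Delta x_m\right)$ from its proof of Lemma~\ref{lemma:limit_price_impact}), and reduce the hypothesis to the sign of $(1+\alpha)p_x\Delta x_m - p_y\Delta y_m$ at each crossed tick. Your added remarks that the conclusion is uniform in $L$ and that the boundary/empty-tick cases need care are sound but do not change the argument.
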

(Proof in Appendix~\ref{sec:proofprofits})

\section{Empirical Results}
\label{sec:data-analysis}





In this section, we empirically evaluate the theoretical insights  
using data from Uniswap V3.

\vspace{0.1in}\noindent \textbf{Data Collection.}
We collected data from a USDC/WETH\footnote{0x88e6a0c2ddd26feeb64f039a2c41296fcb3f5640 pool hash} pool over 6 months: 
January-June 2024, which comprises 1,013,147 total transactions. The data was collected from the Allium platform \cite{allium}.
From this data, we extracted all JIT transactions, which were identified by using the same criterion as \cite{JITparadox}, i.e. matching transactions which were sandwiched by a mint and burn transaction by the same provider.
The resulting number of JIT swaps was 6,829.
In this section, our results are generated by taking real JIT transactions and simulating our optimized version of JIT investment strategies. 
We assume throughout that a JIT transaction is realized if and only if it was realized in the source data (i.e., we do not simulate different results of bidding auctions among JIT LPs). To simulate a JIT budget, we always assume that the JIT original transaction uses the trader's entire budget. Doing this, we ensure that our simulations do not use more money than  the real execution, so our best results come exclusively from better resource utilization.

\subsection{Results}
We observe three main findings from our empirical evaluation: 
\begin{enumerate}
    \item JIT LPs in the pool we studied currently invest suboptimally and could significantly increase their profits.  (\S\ref{sec:missed-price-impact})
    \item JIT LP utility is primarily driven by price impact rather than fees. (\S\ref{sec:price-impact-drives-utility})
    \item If JIT LPs were to optimize their investment strategies, it would help JIT LPs and traders, at the expense of passive LPs. (\S\ref{sec:market-level-effects})
\end{enumerate}

\subsubsection{Suboptimal Investment Strategies}
\label{sec:missed-price-impact}

\begin{figure}[t]
    \centering

      \begin{subfigure}[t]{0.40\textwidth}
        \centering
        \includegraphics[width=\linewidth]{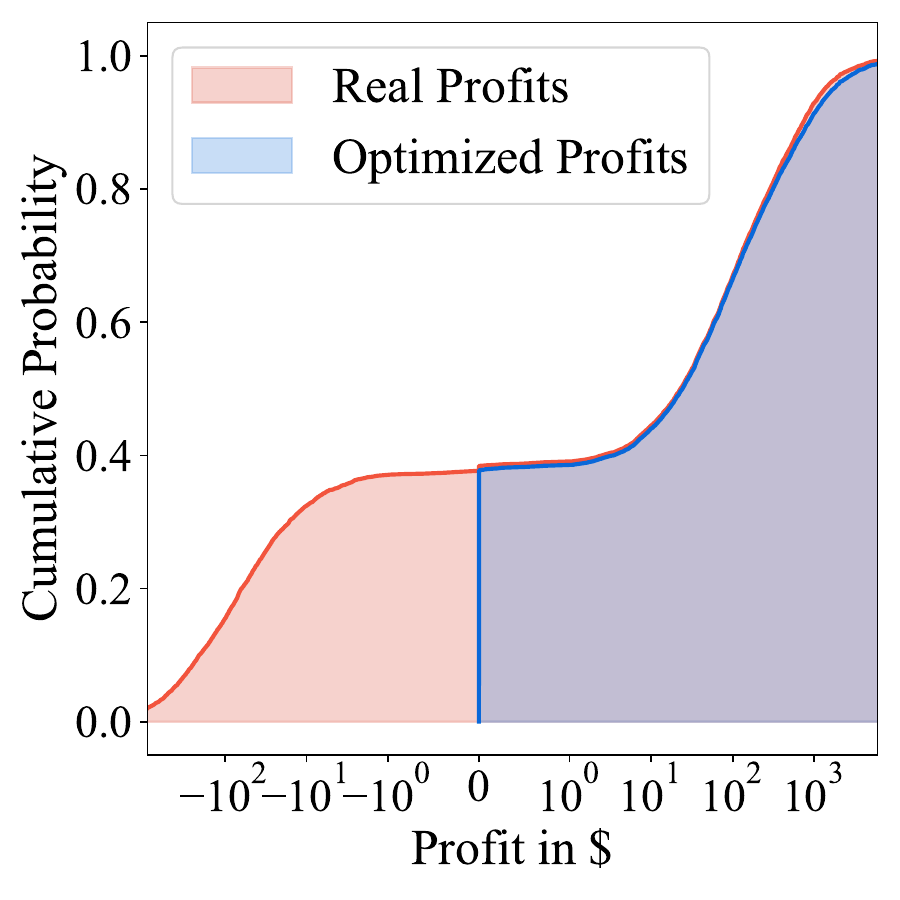}
        \caption{CDF of profit for a real JIT LP vs. optimized strategy. Optimized JIT avoids costly trades with high price impact.}
        \label{fig:profits}
    \end{subfigure}
    \hfill
    \begin{subfigure}[t]{0.58\textwidth}
        \centering
        \begin{tikzpicture}
            \begin{axis}[
                width=\linewidth,
                height=5.5cm,
                xlabel=Transaction,
                ylabel=Gain \$,
                x tick label style={/pgf/number format/1000 sep=},
                enlargelimits=0.05,
                legend style={
                    at={(0.7,0.97)},
                    anchor=north east,
                    draw=none,
                    fill=none
                },
                ybar,
                bar width=2pt,
            ]
            \addplot table [x expr=\coordindex, y=Real Profits, col sep=comma] {data/jitEdgeSample.csv};
            \addplot table [x expr=\coordindex, y=Optimized Profits, col sep=comma] {data/jitEdgeSample.csv};
            \legend{Real Gains, Optimized Gains}
            \end{axis}
        \end{tikzpicture}
        \caption{A sample of 15 transactions comparing absolute profits. While real JITs occasionally perform optimally, losses occur due to participation in losing trades and, at times, suboptimal capital allocation.}
        \label{fig:profitsSample}
    \end{subfigure}
    \caption{Comparison between actual and optimized JIT LP behavior. (a) focuses on aggregate performance, while (b) zooms in on individual transactions.}
    \label{fig:side_by_side_profits}
\end{figure}
Empirical data reveals that real-world JIT LPs underperform significantly relative to their potential gains, as illustrated in Figure~\ref{fig:side_by_side_profits}.
Many JIT agents engage in transactions where 
\begin{figure}
    \centering
    \includegraphics[width=0.6\linewidth]{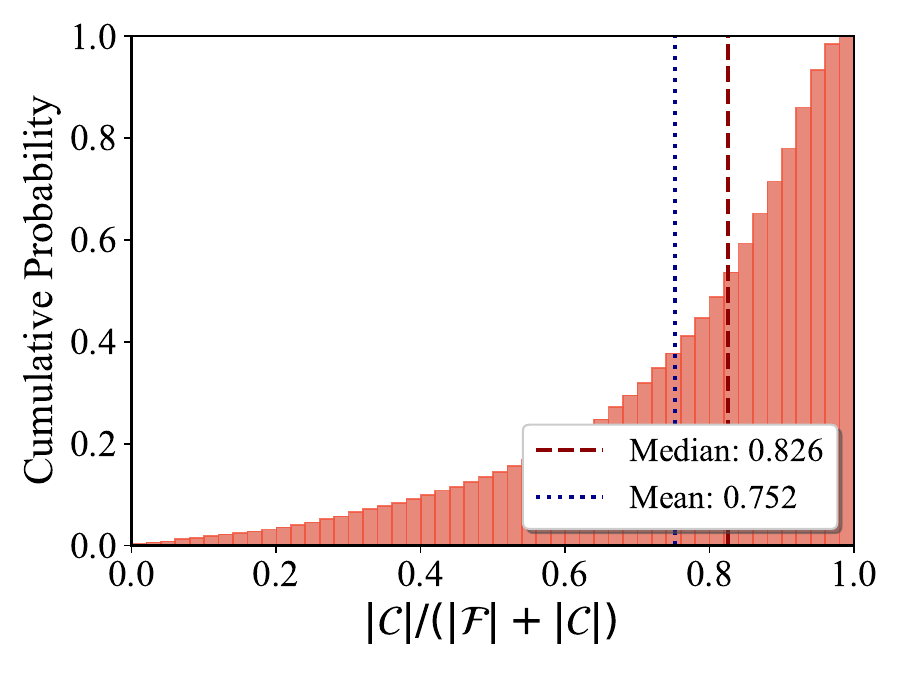}
    \caption{CDF of the proportion of JIT returns resulting from price impact as opposed to fees, i.e., \( \frac{|\mathcal{C}|}{|\mathcal{C}| + |\mathcal{F}|} \). 
    In most JIT transactions,  price impact has a much greater impact on returns than fees, accounting for $75\%$ of the total returns on average.  }
    \label{fig:perSwapVsPI}
    \vspace{-0.2in}
\end{figure}
 the adverse effects of price movements outweigh the fees earned, resulting in negative net utility (Figure~\ref{fig:side_by_side_profits}(a)). 
In our sample of 6,884 JIT transactions, real JIT LPs could have earned up to 69\% more than their current profit 
if they had adopted Algorithm 1.
Our optimized allocation strategy not only helps JIT LPs identify when \emph{not} to invest, but it can also increase returns on already-profitable transactions; for example, Figure~\ref{fig:side_by_side_profits}(b) shows a random sample of 15 JIT transactions, comparing their realized gain in \$ and the estimated gain under Algorithm \ref{alg:jit-strategy}. 
Our optimized strategy never loses money, and it sometimes increases gains relative to the current strategy of JIT traders.
These observations highlight the importance of modeling price dynamics and underscore the need for strategy optimization beyond mere fee maximization.


\subsubsection{JIT Utility Is Driven Primarily by Price Impact}
\label{sec:price-impact-drives-utility}

Our analysis shows that the primary driver of JIT profitability is not fee income but price impact (Figures \ref{fig:perSwapVsPI}). Optimized JIT strategies leverage price dislocations to extract value, often outperforming naive approaches that focus solely on fees. 
This illustrates the value of our study relative to prior works  that focus on optimizing fee share \cite{tang2024game}.





\subsubsection{Market-Level Implications of Optimized JIT Behavior}
\label{sec:market-level-effects}

Currently, the participation of JIT LPs in CLMMs is limited, and their share of fee revenue remains negligible when compared to passive LPs; in our dataset, JIT LPs claimed less than 2\% of total fees paid in the system. 
However, simulations suggest that increased adoption of optimized JIT strategies could substantially reshape the  CLMM  ecosystem. In particular, JIT LPs affect both the distribution of fees among liquidity providers and the execution quality experienced by traders. Figures~\ref{fig:jitBudgetComparison}(a) and~\ref{fig:jitBudgetComparison}(b) illustrate two key market-level effects. As JIT LPs increase their capital allocation, they capture a growing share of the fee revenue, which reduces the earnings of passive LPs by up to 44\%  per trade, when the JIT budget is large (Figure~\ref{fig:jitBudgetComparison}(a)). At the same time, traders benefit from improved execution due to reduced price impact, as the injected liquidity narrows the effective spread and enables more efficient absorption of large orders (Figure~\ref{fig:jitBudgetComparison}(b)).

\begin{figure}[h]
  \centering \setlength{\tabcolsep}{2pt}
  \begin{tabular}{cc}
    \begin{tikzpicture}
    \begin{axis}[
    	width=0.5\textwidth,
    	xlabel=JIT budget multiplier,
    	ylabel=Fees received by passive LPs (\%),
    	legend style={at={(0.5,-0.2)},anchor=north,legend columns=-1},
        xtick distance=0.5,      
        ytick distance=10,
        yticklabel=\pgfmathprintnumber{\tick}
    ]
    \addplot table [x=JIT Budget Multiplier,
    y expr=\thisrow{Passive LP Fees}/73.27257563792762 * 100, col sep=comma] {data/jitImpact.csv};
    \end{axis}
    \end{tikzpicture}
    &
 \begin{tikzpicture}
    \begin{axis}[ width=0.5\textwidth,
        xlabel=JIT budget multiplier,
ylabel={Slippage (\%)},
        y label style={yshift=0em},
        legend style={at={(0.5,-0.1)}, anchor=north, legend columns=-1},
        xtick distance=0.5,
        ytick distance=0.003,
        yticklabel=\pgfmathprintnumber{\tick}
    ]
    \addplot table [
        x = {JIT Budget Multiplier},
        y expr = ((146201.4592813329 - \thisrow{Trader Income})/146201.4592813329*100),
        col sep=comma
    ] {data/jitImpact.csv};
    \end{axis}
\end{tikzpicture}

\\
    (a)  Fees received by passive LPs  & (b) Trader surplus attributable to JIT liquidity
  \end{tabular}
   \caption{Both plots compare the effects of higher JIT budget with a baseline where no JIT engage (x=0). (a) Effect of JIT budget on passive LP fees.  Passive LP fee gains diminish as JIT liquidity increases.  (b) Trader  benefits via reduced slippage with higher JIT budgets. The Y-axis shows the slippage suffered by the trader under different budgets (Recall slippage from \S\ref{ex:slippage}).
   } \label{fig:jitBudgetComparison}
\end{figure}
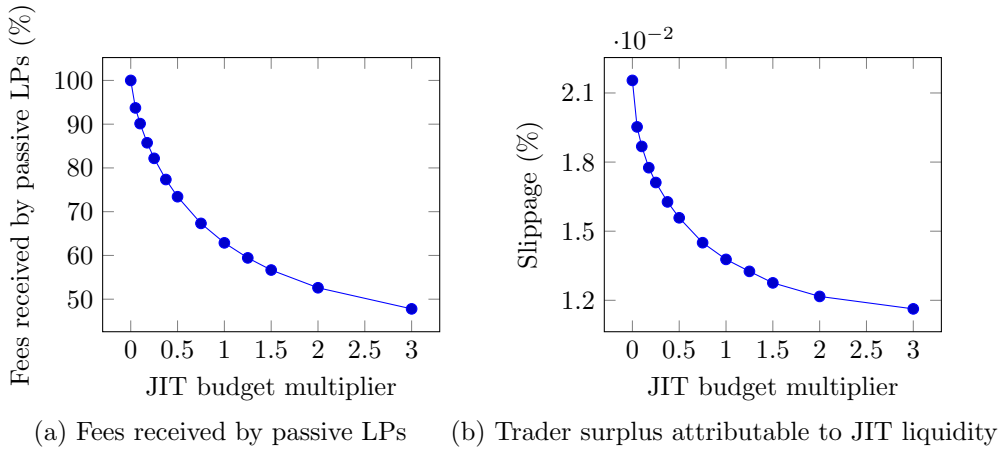

Averaging these values across transactions reveals two trends: the marginal benefit to passive LPs decreases with JIT participation, while traders benefit from reduced slippage. Together, these results indicate that although optimized JIT activity may diminish returns for passive LPs, it can improve capital efficiency and trading experience in CLMMs. This reinforces the view that well-designed JIT strategies can contribute positively to market structure.

\section{Discussion and Implications}
\label{sec:overall}

Our results highlight the strategic advantage of  JIT in CLMMs. Specifically, a JIT LP that enters just before a trade—when the pool price is aligned with the market price—can benefit from the resulting swap-induced price movement. In contrast, passive LPs are exposed to losses when price deviations from the market are corrected via arbitrage.

Importantly, JIT profitability arises only under specific conditions (Proposition~\ref{prop: feeVSil}), often requiring precise timing and positioning (\S\ref{sec:types-of-trades}). Given the competitive nature of bidding for inclusion (e.g., via Flashbot bundles), it is not surprising that real-world JIT activity remains limited.

JIT LPs complement the actions of arbitrageurs. While arbitrageurs restore price alignment after deviations, JIT providers preemptively mitigate such deviations by injecting liquidity in trades with negative price impact \( \mathcal{C} \). In this sense, JIT acts as a frictional force—reducing the magnitude of price dislocations (Figure~\ref{fig:jitBudgetComparison}) that would otherwise create larger arbitrage opportunities. This increased liquidity is beneficial to traders, as it reduces slippage and improves execution quality. However, it also comes at the expense of passive LPs, who face reduced fee income.

\subsection{Real dynamics of providers, arbitrageurs and traders}
A common perception is that arbitrageurs profit primarily at the expense of passive LPs through the  price impact. However, this is not always the case. Consider a model where the market price \( P_t \) evolves stochastically, and the pool price \( Q_t \) follows it with some lag due to noise. Suppose a trade moves the pool price from \( Q_t = P_t \) to \( Q_{t+1} \ne P_{t+1} \), and an arbitrageur subsequently restores alignment at \( Q_{t+2} = P_{t+2} \). If \( P_t = P_{t+1} = P_{t+2} \), the passive LP’s price impact \( \mathcal{C}_n \) is zero, but fees \( \mathcal{F}_n > 0 \) are still earned from both trades.

In this case, the arbitrageur's profit comes not from the LP, but from the noisy trader who executed the initial swap at a worse price—e.g., buying at \( \sqrt{Q_t Q_{t+1}} > Q_t = P_t \). The arbitrageur captures this overpayment at \( \sqrt{Q_{t+1} Q_{t+2}} \). JIT LPs benefit even more from such trades. By sandwiching the swap, they can collect both price impact gains and fees. Moreover, their presence dampens the price displacement, thereby reducing the arbitrageur’s profit and the passive LP’s fee share from the second trade.

To illustrate this, suppose a noisy trade pushes the price to \( Q_{t+1} > Q_t \). A JIT LP that enters before the trade adjusts the effective spot price to \( Q^*_{t+1} \), such that \( Q_{t+1} > Q^*_{t+1} > Q_t \). The arbitrageur, now facing a smaller price deviation, captures less profit: \( \sqrt{Q^*_{t+1} Q_{t+2}} < \sqrt{Q_{t+1} Q_{t+2}} \).

Assuming all arbitrage opportunities are rapidly corrected, the pool price trajectory is driven by market price movements plus correctable noise. This noise does not generate price impact directly; rather, it creates temporary dislocations that can be monetized by arbitrageurs and JIT LPs. In fact, this type of exploitable noise is the basis of Loss-versus-Rebalancing (LVR)~\cite{LVR}, a cost borne by passive LPs even when price impact vanishes.

JIT LPs exploit this noise by positioning liquidity around anticipated dislocations. Their interventions reduce price volatility, preempt arbitrage, and extract part of the fees that would otherwise accrue to passive LPs or arbitrageurs. If arbitrage profits stem from noisy trades rather than passive LPs, JIT LPs can be seen as first movers that mitigate price deviations while capturing a portion of this surplus.

 \section{Related Work}
\label{sec:related}

Despite a growing body of work analyzing LP incentives in AMMs, including recent theoretical models of JIT LPs~\cite{JIT,JITparadox}, prior efforts either address market-level dynamics under informational asymmetry~\cite{JITparadox} or empirically characterize JIT activity~\cite{JIT}. To our knowledge, no existing work models the transaction-level, optimization-based behavior of JIT LPs in CLMMs such as Uniswap V3. We fill this gap by formally modeling the per-transaction utility of JIT LPs based on price impact (\S\ref{sec:imperloss}) and fee accrual (\S\ref{sec:fees}), analyzing how optimal strategies vary with trade direction and liquidity depth, and comparing predictions with observed on-chain behavior.

Complementing our work, Cartea et al.~\cite{cartea2024decentralized} develop a continuous-time model for strategic liquidity provision in CLMMs. They derive optimal dynamic strategies for LPs adjusting liquidity ranges over time, accounting for fee income, predictable loss, and concentration risk. Their results show how exchange rate drift and rebalancing costs shape behavior. Unlike our discrete, transaction-level model where JIT LPs react to individual swaps, their focus is on long-term LPs managing ongoing exposure. Still, their analysis highlights the need to model fee income and adverse selection in CLMMs.

Tang et al.~\cite{tang2024game} present a game-theoretic model of liquidity provision in CLMMs, focusing on interactions between passive LPs. Their static model captures how budget constraints and price tick granularity affect equilibrium behavior, identifying a unique Nash equilibrium with a water-filling strategy. By comparing theory to on-chain behavior, they show LPs in risky pools deviate more from optimality than those in stable pools. 
However, they do not model or consider the effects of JIT LPs.

In this work, we analyze price impact as a key metric. Price impact and LVR coincide for individual trades~\cite{LVRvsILI,LVRvsILII}. While LVR matters for passive LPs over time, JIT LPs primarily care about price impact.

Recent studies explore how informational asymmetries and adversarial timing affect CLMM dynamics~\cite{daian2020flash,qin2022quantifying,JITparadox,heimbach2022risks}. Capponi et al.\cite{JITparadox} model adverse selection and value extraction by informed traders. Qin et al.\cite{qin2022quantifying} and Daian et al.~\cite{daian2020flash} examine MEV, showing how latency arbitrage and frontrunning harm LPs. While these works address systemic risks, we focus on microstructure-level decisions by JIT LPs under fee competition and price impact. Our formal framework captures their transaction-specific optimization, and shows how misjudging price impact can erode profits—even for strategically-timed liquidity.

\section{Conclusion}
\label{sec:concl}

Emerging tools such as  CLMMs, Flashbots, and Uniswap V4 Hooks are reshaping DeFi by enabling JIT  liquidity providers to improve capital efficiency, manage risk, and align strategies with real-time on-chain dynamics. Yet, empirical evidence shows that most JIT agents fail to exploit profitable opportunities and often incur losses due to simplistic strategies.

This work demonstrates that significant gains are achievable via optimization-based approaches that explicitly weigh fee accrual against price impact. Our transaction-level framework guides JIT agents toward utility-maximizing strategies that allocate capital more effectively and reduce adverse outcomes such as slippage or loss.

More broadly, the results support a shift toward reactive, transaction-driven market models. By analyzing individual trades, rather than long-term aggregates, we obtain sharper insights into the microstructure of AMMs and the role of JIT behavior.

Future research should explore extensions of AMM models that account for JIT strategies and   protocols that safeguard fairness for passive LPs. Closing the loop between protocol design and agent strategy is key to building more adaptive and sustainable DeFi ecosystems.

\section*{Acknowledgments}
This work was partially funded by CAPES; by FAPERJ under grant E-26/204.268/2024; and by the Conselho Nacional de Desenvolvimento Científico e Tecnológico (CNPq), Brazil, under grants 403601/2023-1 and 315106/2023-9. 
It was also supported in part by the National Science Foundation under grant CNS-2325478, as well as by the Initiative for Cryptocurrencies and Contracts (IC3) and the CyLab Secure Blockchain Initiative, together with their respective industry sponsors.

\bibliography{main}


\appendix

\appendix

\section{Notation}\label{sec:notation}

Table~\ref{tab:notation} summarizes the notation used throughout this work.

\begin{table}[htb]
\centering
\renewcommand{\arraystretch}{1.3}
\caption{Summary of Notation}
\begin{tabular}{ll}
\hline
\textbf{Symbol} & \textbf{Description} \\ \hline
$q$ & Pool price (amount of token $Y$ per unit of token $X$) \\
$q'$ & New pool price after a swap \\
$q^*$ & New pool price after a swap with no JIT participation \\
$p_x$, $p_y$ & Market prices of tokens X and Y in USD \\
$x_n(q)$, $y_n(q)$ & Amount of tokens X and Y held by LP $n$ at pool price $q$ \\
$(L, a, b)$ & A liquidity position: liquidity $L$ between ticks $a$ and $b$ \\
$t_m$ & Tick price: $t_m = 1.0001^{m\cdot \tau}$ \\
$\V_{(a,b)}$ & Dollar price per unit of liquidity in range $(a,b)$ \\
$\mathcal{C}_n$ & Absolute price impact for LP $n$ \\
$\mathbf{PI}$ & Relative price impact  (ratio of loss to initial value) \\
$\mathcal{F}_n$ & Total fee reward for LP $n$ \\
$P_{n,m}$ & Liquidity provided by passive LP $n$ active at tick $m$ \\
$P_m$ & Pasive liquidity active at tick $m$ \\
$\delta_m$ & Fees paid in tick $m$ \\
$v_j^*$ & Gas cost or bid paid by JIT LP \\
\hline
$\mathcal{U}$ & Utility of JIT LP \\
$s$ & Strategy of JIT LP $(L,a,b,v)$ \\
$\rho$ & Budget of JIT LP \\
$\mathcal{S}$ & Strategy space of JIT LP \\
\( ([N], \mathcal{U}, \mathcal{S}, \theta, \rho) \) & JIT LP optimization problem \\
\hline
\end{tabular}
\label{tab:notation}
\end{table}



\section{Proof of Theorem~\ref{theorem:price_impact}} 
\label{sec:priceimpactproof}
\begin{reptheorem}[Threshold condition for price impact \ref{theorem:price_impact}]
Let       $\hat{q} \triangleq \max\{a, \min\{b, q\}\}$, and similarly for $\hat{q}'$. 
The change in pool price relates to price impact as follows:
\begin{itemize}
    \item Case 1: \( q' < q \). Then, \( \mathcal{C} \le 0 \) if and only if
    \begin{equation}
    \frac{1}{\hat{q}} \left( \frac{p_x}{p_y} \right)^2 \ge \hat{q}'. \label{ineq1app}
    \end{equation}
    
    \item Case 2: \( q < q' \). Then, \( \mathcal{C} \le 0 \) if and only if
    \begin{equation}
    \frac{1}{\hat{q}} \left( \frac{p_x}{p_y} \right)^2 \le \hat{q}'. \label{ineq2app}
    \end{equation}
\end{itemize}
\end{reptheorem}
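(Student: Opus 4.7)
The plan is to start from the equivalence already noted in the proof sketch, namely that $\mathcal{C} \le 0$ iff $V_{\text{withdraw}} \ge V_{\text{mint}}$, and then compute the difference $V_{\text{withdraw}} - V_{\text{mint}}$ explicitly using the position-holdings formula \eqref{eq:position-holdings}. Since $a$, $b$, $L$ and the terms $1/\sqrt{b}$, $\sqrt{a}$ all cancel when we subtract, we obtain
\[
V_{\text{withdraw}} - V_{\text{mint}} = L\left[p_x\left(\tfrac{1}{\sqrt{\hat q'}} - \tfrac{1}{\sqrt{\hat q}}\right) + p_y\left(\sqrt{\hat q'} - \sqrt{\hat q}\right)\right].
\]
The goal is then to factor this expression so that its sign becomes transparent.

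Next, I would rewrite $\tfrac{1}{\sqrt{\hat q'}} - \tfrac{1}{\sqrt{\hat q}} = \tfrac{\sqrt{\hat q} - \sqrt{\hat q'}}{\sqrt{\hat q\hat q'}}$ and pull out the common factor $(\sqrt{\hat q'} - \sqrt{\hat q})$. After this manipulation the difference takes the form
\[
V_{\text{withdraw}} - V_{\text{mint}} = \frac{L\,(\sqrt{\hat q'} - \sqrt{\hat q})}{\sqrt{\hat q\hat q'}}\Bigl(p_y\sqrt{\hat q\hat q'} - p_x\Bigr),
\]
which decomposes the sign into the product of two factors: the directional factor $\sqrt{\hat q'} - \sqrt{\hat q}$, and the threshold factor $p_y\sqrt{\hat q\hat q'} - p_x$.

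From here, the two cases follow by a simple sign analysis. In Case 1, $q' < q$ implies $\hat q' \le \hat q$ by monotonicity of the projection onto $(a,b)$, so the directional factor is non-positive; therefore $V_{\text{withdraw}} \ge V_{\text{mint}}$ iff the threshold factor is also non-positive, i.e.\ $\sqrt{\hat q\hat q'} \le p_x/p_y$, which upon squaring and dividing by $\hat q$ yields \eqref{ineq1app}. Case 2 is symmetric: $q < q'$ gives a non-negative directional factor, so the inequality flips and we obtain \eqref{ineq2app}.

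I do not anticipate a major obstacle. The only subtle point is making sure that $q \le q'$ implies $\hat q \le \hat q'$ even when one or both of $q,q'$ falls outside $(a,b)$, which is immediate from writing $\hat q = \min\{b,\max\{a,q\}\}$ as a composition of two monotone truncations. One should also briefly address the degenerate subcases where $\hat q = \hat q'$ (e.g.\ when both $q, q'$ lie on the same side of the interval): there the left-hand side of \eqref{eq:priceimpact1b} vanishes and both inequalities \eqref{ineq1app}, \eqref{ineq2app} reduce to the trivial equality case, so the ``iff'' statement still holds.
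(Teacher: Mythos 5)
Your proof is correct and takes essentially the same route as the paper's: both reduce $\mathcal{C}\le 0$ to $V_{\text{withdraw}}\ge V_{\text{mint}}$, substitute \eqref{eq:position-holdings}, cancel the range-dependent terms, and factor out $\sqrt{\hat q'}-\sqrt{\hat q}$ to arrive at the threshold comparison between $\sqrt{\hat q\hat q'}$ and $p_x/p_y$ (your explicit product form handles both cases uniformly, where the paper proves one case and appeals to symmetry). One small correction to your closing remark: when $\hat q=\hat q'$ the ``if and only if'' does \emph{not} survive, since $\mathcal{C}=0\le 0$ always holds there while the stated inequality reduces to $p_x/p_y \ge \hat q$ (resp.\ $\le \hat q$), which need not be true; the paper simply excludes this degenerate case by assumption, and you should do the same rather than assert the equivalence still holds.
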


\begin{proof}
We prove the case \( q < q' \); the argument for \( q' < q \) is symmetric. In addition,  we assume $\hat{q} \neq \hat{q}' $ as the case $\hat{q} = \hat{q}' $ yields  $ \mathbf{PI} = 0 $.


We aim to show \( \mathbf{PI} \le 0 \) if and only if~\eqref{ineq2app} holds. Since \( p_x = p_x' \) and \( p_y = p_y' \), we can use the simplified expression for  price impact:
\[
\mathbf{PI} = 1-\frac{V_\textrm{withdraw}(\bm{p})}{V_\textrm{mint}(\bm{p})}= 1 - \frac{p_x x_n(q') + p_y y_n(q')}{p_x x_n(q) + p_y y_n(q)}.
\]
This expression is non-positive if and only if the final value (numerator) is greater than or equal to the initial value (denominator), i.e.,
\[
p_x x_n(q') + p_y y_n(q') \ge p_x x_n(q) + p_y y_n(q).
\]

Substituting the definitions of \( x_n(q) \) and \( y_n(q) \) from \eqref{eq:position-holdings}, and focusing on a single position \((L,a,b)\), we obtain:
\[
p_x \left( \frac{1}{\sqrt{\hat{q}'}} - \frac{1}{\sqrt{b}} \right) + p_y \left( \sqrt{\hat{q}'} - \sqrt{a} \right) \ge p_x \left( \frac{1}{\sqrt{\hat{q}}} - \frac{1}{\sqrt{b}} \right) + p_y \left( \sqrt{\hat{q}} - \sqrt{a} \right).
\]
Eliminating common terms on both sides leads to:
\[
p_x \left( \frac{1}{\sqrt{\hat{q}'}} - \frac{1}{\sqrt{\hat{q}}} \right) \ge p_y \left( \sqrt{\hat{q}} - \sqrt{\hat{q}'} \right).
\]
Now, dividing both sides by \( \sqrt{\hat{q}} - \sqrt{\hat{q}'} > 0 \) (since \( q < q' \Rightarrow \hat{q} \leq \hat{q}' \)), and recalling that we assume $\hat{q} \neq \hat{q}'$,  we get the desired inequality:
\[
\frac{p_x}{p_y} \le \sqrt{\hat{q} \hat{q}'}.
\]
The reverse inequality gives \( \mathbf{PI} > 0 \). 

 The  remainder of the proof follows from the above,  by swapping the signs. \dsm{maybe improve here}
\end{proof}


\subsection{Proof of Corollary~\ref{coro2}}

\begin{repcorollary}[Gains under diverging prices  and losses under  converging prices \ref{coro2}]
When initial and final prices do not cross $p_x/p_y$, gains and losses are determined by the direction of the price movement:
\begin{itemize}
    \item Gain under diverging prices: 
\( \mathcal{C} \le 0 \) if \( q' < q \leq  {p_x}/{p_y} \)  or \( q' > q \ge {p_x}/{p_y} \), i.e., if  a trade moves the AMM price \emph{away} from the external market price, the LP experiences gains.
\item Loss under converging prices:  \( \mathcal{C} \ge 0 \) if \( q < q' < {p_x}/{p_y} \)  or \( q > q' > {p_x}/{p_y} \),  i.e., if  a trade moves the AMM price \emph{towards} the external market price, the LP experiences losses.
\end{itemize}
\end{repcorollary}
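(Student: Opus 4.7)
The plan is to derive Corollary~\ref{coro2} as a direct four-case application of Theorem~\ref{theorem:price_impact}, exploiting the fact that each of the threshold inequalities in the theorem becomes trivial once the ordering of $\hat{q}$, $\hat{q}'$, and $p_x/p_y$ is fixed by the hypotheses of the corollary.

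First I would rewrite the two threshold conditions of Theorem~\ref{theorem:price_impact} in multiplicative form. Since $\hat{q},\hat{q}' > 0$, Case~1 ($q'<q$) becomes $\mathcal{C}\le 0 \iff (p_x/p_y)^2 \ge \hat{q}\,\hat{q}'$, and Case~2 ($q<q'$) becomes $\mathcal{C}\le 0 \iff (p_x/p_y)^2 \le \hat{q}\,\hat{q}'$. This reformulation displays $p_x/p_y$ as a geometric-mean threshold that is immediately comparable to the $q$-values assumed in the corollary.

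Second, I would dispatch the four sub-cases in turn. For the gain statement: if $q' < q \le p_x/p_y$, then $\hat{q}'\le\hat{q}\le p_x/p_y$, hence $\hat{q}\,\hat{q}' \le (p_x/p_y)^2$ and the Case~1 threshold yields $\mathcal{C}\le 0$; the symmetric case $q' > q \ge p_x/p_y$ gives $\hat{q}\,\hat{q}' \ge (p_x/p_y)^2$ and Case~2 applies. For the loss statement: if $q < q' < p_x/p_y$, the strict inequality $\hat{q}\,\hat{q}' < (p_x/p_y)^2$ violates the Case~2 threshold, forcing $\mathcal{C} > 0$; and if $q > q' > p_x/p_y$, the Case~1 threshold fails in the same fashion.

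The main obstacle I would watch for is the behavior of the clamp $\hat{q}=\max\{a,\min\{b,q\}\}$: in principle, projecting onto $[a,b]$ could move $\hat{q}$ to the opposite side of $p_x/p_y$ from $q$, breaking the chain of inequalities above. I would handle this by observing that on any flat region ($q\ge b$ or $q\le a$) the position holdings $x(q)$ and $y(q)$ are constant by~\eqref{eq:position-holdings}, so no price impact accrues there and the entire contribution to $\mathcal{C}$ comes from the portion of the trajectory inside $[a,b]$—exactly the regime in which $\hat{q}=q$ and $\hat{q}'=q'$, so the case analysis above applies without modification.
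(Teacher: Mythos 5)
Your proposal is correct and follows essentially the same route as the paper: rewrite the threshold conditions of Theorem~\ref{theorem:price_impact} multiplicatively as a comparison between $\left(p_x/p_y\right)^2$ and $\hat{q}\,\hat{q}'$, then read off the sign of $\mathcal{C}$ from the ordering assumed in each sub-case. You are in fact somewhat more careful than the paper's own proof, which writes out only the two loss cases, does not discuss the clamping issue you raise, and contains an apparent sign typo (concluding $\mathbf{PI} < 0$ in the case $q > q' > p_x/p_y$, where a loss, i.e.\ $\mathbf{PI} \ge 0$, is what is being claimed).
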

\begin{proof}
    If $q < q' < {p_x}/{p_y}$ then:

    \begin{equation*}
    \begin{split}
        &\left(\frac{p_x}{p_y}\right)^2 > q q' \implies \frac{1}{\hat{q}}\left(\frac{p_x}{p_y}\right)^2 > \hat{q}'
    \end{split}
    \end{equation*}

\noindent
    and, from Theorem~\ref{theorem:price_impact}, we have  $\mathbf{PI} \ge 0$. For the case $q > q' > {p_x}/{p_y}$ 
    \begin{equation*}
    \begin{split}
        &\left(\frac{p_x}{p_y}\right)^2 < q q' \implies \frac{1}{\hat{q}}\left(\frac{p_x}{p_y}\right)^2 < \hat{q}'
    \end{split}
    \end{equation*}

    \noindent
    and $\mathbf{PI} \ge 0$.
    
\end{proof}

\section{Proof of Lemma~\ref{lemma:limit_price_impact}} \label{sec:prooflemalimitingbehavior}
\begin{replemma}[Limiting behavior of price impact \ref{lemma:limit_price_impact}]
    Let \( \Delta x \) be the quantity of token \( X \) exchanged in a trade. Consider \( q \in (t_m, t_{m+1}]\) the current pool price and consider \( ( L, a,b)\) to be the single position of some provider $n$ such that $q \in (a,b)$, in words, consider \( L \) to be all the liquidity minted before the transaction by some provider $n$. 

    Then:
\[
\lim_{L \to \infty} \mathcal{C} = \Delta x \cdot (p_y \cdot q - p_x), \quad \text{and} \quad \lim_{L \to 0} \mathcal{C} = 0.
\]
\end{replemma}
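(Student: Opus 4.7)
The plan is to start from the simplified expression for the price impact in equation~\eqref{eq:priceimpact1b}, valid under the assumption that market prices remain fixed across the trade: $\mathcal{C} = -p_x \Delta x - p_y \Delta y$. This reduces the limit analysis to understanding how the token flows $\Delta x$ and $\Delta y$ scale with the total liquidity $L$.

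The first step is a purely algebraic manipulation using the CLMM identities within an active tick. From equation~\eqref{eq:position-holdings}, restricted to a single position whose range $(a,b)$ contains both the initial price $q$ and the post-trade price $q'$, we have $\Delta x = L\bigl(1/\sqrt{q'} - 1/\sqrt{q}\bigr)$ and $\Delta y = L\bigl(\sqrt{q'} - \sqrt{q}\bigr)$. Dividing these eliminates $L$ and yields the clean identity $\Delta y = -\Delta x \sqrt{q q'}$. Substituting back produces the compact form $\mathcal{C} = \Delta x\bigl(p_y \sqrt{q q'} - p_x\bigr)$, so the problem collapses to tracking $\sqrt{q q'}$ as a function of $L$.

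For the limit $L \to \infty$, I would invoke Lemma~\ref{lemma:continuity-monotonicity} directly, which guarantees $q' \to q$ monotonically (applied to the single position at hand, since the hypothesis says $L$ is all the liquidity active at the current tick). Substituting gives $\sqrt{q q'} \to q$ and hence $\mathcal{C} \to \Delta x (p_y q - p_x)$, matching the first claim. This direction is essentially a one-line consequence of the previous lemma.

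For the limit $L \to 0$, the argument relies on the fact that the per-position reserves of $X$ are bounded above by $L\bigl(1/\sqrt{a} - 1/\sqrt{b}\bigr)$, again by equation~\eqref{eq:position-holdings}, so both $|\Delta x|$ and $|\Delta y|$ are $O(L)$ for any swap routed through this single position, forcing $\mathcal{C} \to 0$. The main obstacle I anticipate is the modeling subtlety of this limit: for a strictly fixed trade size $\Delta x > 0$, the swap becomes infeasible once $L$ drops below the threshold where the position can absorb $\Delta x$ without the price exiting $(a,b)$. The cleanest resolution, in line with the paper's sketch, is to reinterpret $\Delta x$ as the portion of the swap actually absorbed by this position, which necessarily scales with its capacity and hence tends to zero together with $L$, yielding $\mathcal{C} \to 0$ in the limit.
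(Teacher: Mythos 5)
Your proof is correct and follows essentially the same route as the paper's: the identity $\Delta y = -\Delta x \sqrt{q q'}$ derived from Eq.~\eqref{eq:position-holdings}, Lemma~\ref{lemma:continuity-monotonicity} to get $q' \to q$ as $L \to \infty$, and vanishing participation for $L \to 0$. The only difference is cosmetic: in the $L \to 0$ case the paper keeps external passive liquidity $P_m > 0$ explicit so that the provider's share $L/(L+P_m)$ vanishes, whereas you reinterpret $\Delta x$ as the portion of the swap absorbed by the position; both resolve the same feasibility subtlety and yield the stated limits.
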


\begin{proof}\label{proof: Lemma6}
    From \eqref{eq:priceimpact} and \eqref{eq:position-holdings} we have    \begin{equation*}
        \mathcal{C} = -p_x\left(\frac{1}{\sqrt{q}} - \frac{1}{\sqrt{q'}}\right) L - p_y\left(\sqrt{q} - \sqrt{q'} \right) L.
    \end{equation*}
Suppose now that $\Delta x$ is given and that there is strictly positive liquidity $P_m$ at tick \( m \). It is straightforward to see that if no \( L \rightarrow 0 \)), then \( \mathcal{C} \rightarrow 0 \), since there is no participation in the trade.
\noindent
Alternatively, from {Lemma~\ref{lemma:continuity-monotonicity}}, we know that \( q' \) is monotonically decreasing in \( L \). Moreover, since \( q' \to q \) as \( L \to \infty \), we conclude that in the limit of infinite liquidity, the final price converges to the initial price.
\[
\lim_{L\rightarrow \infty} \mathcal{C} = \lim_{L\rightarrow \infty} -p_x\left(\frac{1}{\sqrt{q}} - \frac{1}{\sqrt{q'}}\right) L - p_y\left(\sqrt{q} - \sqrt{q'} \right)L
\]

\noindent
For very large \(L\) we know that $q'$ and $q$ will be very close, therefore we can safely assume that at some point they will be in the same tick \(m\). Hence, we can use Equation~\eqref{eq:position-holdings} to rewrite the cost entirely in terms of \( L \):

\[
\lim_{L\rightarrow \infty} \mathcal{C} = \lim_{L\rightarrow \infty} -p_x\left(\frac{\Delta x}{L+P_m}\right) L - p_y\left(\frac{\Delta y}{L+P_m} \right) L.
\]

\noindent
Using the relations from Equation~\eqref{eq:position-holdings}, we can express \( \sqrt{q'} \) in terms of \( \Delta x \), and substitute it into the expression for \( \Delta y \), thus writing \( \Delta y \) as a function of \( \Delta x, P, L \). Substituting this into the cost equation yields:

\[
\lim_{L\rightarrow \infty} \mathcal{C} = \lim_{L\rightarrow \infty} -p_x\left(\frac{L}{L+P_m}\right) \Delta x + p_y\left(\Delta x q \frac{L+P_m}{\Delta x \sqrt{q} + L + P_m} \right) \left(\frac{L}{L+P_m}\right).
\]

\noindent
Since \( \lim_{L \to \infty} \frac{L}{L+P_m} = 1 \) and \( \lim_{L \to \infty} \frac{L+P_m}{\Delta x \sqrt{q} + L + P_m} = 1 \), we obtain:
\[
\lim_{L\rightarrow \infty} \mathcal{C} = -p_x \Delta x + p_y \Delta x q = \Delta x(p_y q - p_x),
\]
So, we conclude:

\[
\lim_{L \to 0} \mathcal{C} = 0, \qquad \lim_{L \to \infty} \mathcal{C} = \Delta x(p_y q - p_x).
\]
\end{proof}
The intuition behind Lemma~\ref{lemma:limit_price_impact} is that when there is no liquidity participation, the JIT LP incurs no price impact. Conversely, if an enormous amount of liquidity is injected, then the final price \( q' \) converges to the initial price \( q \) as \( L \to \infty \), and all of the price impact reduces to the net gain or loss from exchanging token \( X \) for token \( Y \) at the fixed rate \( q \).


\section{Proof of Theorem~\ref{theorem:utility_opt}}\label{app:last_theorem}

In this section we show the proof of \textbf{Theorem \ref{theorem:utility_opt}}. We begin by first showing two necessary results \textbf{lemmas~\ref{lemma:continuity-monotonicity} \ref{lemma:limit-fee}} and the making the demonstration of the theorem.

\begin{lemma}\label{sec:limitfeeproof}
Consider an LP $n$ with a single position $(L,a,b)$, such that \(q \in (a, b)\). Let there be a swap that pays \( \Delta x > 0 \) $X$ tokens. The total fee earned by the provider is given by:
\[
\mathcal{F}(L) = \sum_{\forall m| t_m \in (a,b)} \delta_m \cdot \frac{L}{L + {\tilde{P}_{n,m}}},
\]
where \( \tilde{P}_{n,m} = \sum_{i \not= n } P_{i,m}\) is the total liquidity in tick $m$ without $ L = P_{n,m} $. In addition, \( \mathcal{F}(L) \) is continuous in \( L \) and is bounded above by \( \alpha  \cdot \Delta x \),
\begin{equation}
    \lim_{L \rightarrow \infty} \mathcal{F}(L) \leq \alpha \cdot \Delta x. 
\end{equation}
\end{lemma}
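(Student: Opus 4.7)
\medskip

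\noindent\textbf{Proof proposal.}
The plan is to derive the closed-form expression for $\mathcal{F}(L)$ directly from the pro-rata fee-distribution rule established in Section~\ref{sec:fees}, then argue continuity by composing continuous maps, and finally bound the limit via a token-conservation argument.

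First, I would unroll the definition of $\mathcal{F}_n$ from Section~\ref{sec:fees}. Since LP $n$ holds a single position $(L,a,b)$, their per-tick liquidity is $P_{n,m}=L$ for every $m$ with $t_m\in(a,b)$ and $0$ otherwise. Substituting this and the definition $\tilde{P}_{n,m}=P_m-L=\sum_{i\neq n}P_{i,m}$ into $\mathcal{F}_n=\sum_m \delta_m \cdot P_{n,m}/P_m$ gives exactly
\[
\mathcal{F}(L)=\sum_{m\,:\,t_m\in(a,b)} \delta_m \cdot \frac{L}{L+\tilde{P}_{n,m}}.
\]
This is routine bookkeeping, but one has to be careful that $\delta_m$ itself is a function of $L$: the tick-level swap amount $\Delta x_m$ depends on the post-swap price $q'$, which by Lemma~\ref{lemma:continuity-monotonicity} depends continuously on $L$.

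Second, I would establish continuity in $L$ by decomposing each summand into two continuous factors. The pro-rata weight $L/(L+\tilde{P}_{n,m})$ is visibly continuous on $[0,\infty)$ because $\tilde{P}_{n,m}\geq 0$ and the denominator never vanishes. The fee magnitude $\delta_m=\alpha\,p_x\,\Delta x_m$ is continuous in $q'$ via the closed-form expression for $\Delta x_m$ given in Section~\ref{sec:fees}, and $q'$ is continuous in $L$ by Lemma~\ref{lemma:continuity-monotonicity}; composition preserves continuity, so each $\delta_m(L)$ is continuous. Since the sum is over a finite collection of ticks inside $(a,b)$, the finite sum of continuous functions is continuous.

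Third, for the upper bound, the key observation is token conservation across the swap: the trader pays exactly $\Delta x$ tokens of $X$, and this input is split across ticks so that $\sum_m \Delta x_m = \Delta x$ regardless of $L$. Hence $\sum_m \delta_m = \alpha\, p_x\,\Delta x$, the total swap fee. Since $L/(L+\tilde{P}_{n,m})\leq 1$ for every $m$ and every $L\geq 0$, we obtain $\mathcal{F}(L)\leq \alpha\, p_x\,\Delta x$ uniformly in $L$, and in particular the same inequality survives in the limit $L\to\infty$, yielding the stated bound (up to the $p_x$ factor, which the statement absorbs by working in token units). The main subtlety — and the one place I would be careful — is that increasing $L$ simultaneously grows the pro-rata weight toward $1$ while shrinking each $\Delta x_m$ (because $q'\to q$ by Lemma~\ref{lemma:continuity-monotonicity}); the clean uniform bound via $\sum_m\Delta x_m\leq\Delta x$ sidesteps having to track this trade-off explicitly, so I expect no real difficulty beyond being careful with the domain of summation and the fact that only ticks actually crossed by the swap contribute nonzero $\Delta x_m$.
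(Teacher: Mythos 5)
Your proposal is correct and follows essentially the same route as the paper for the first two claims: unrolling the pro-rata rule with $P_{n,m}=L$ and $\tilde{P}_{n,m}=P_m-L$ to get the closed form, then obtaining continuity by composing the continuity of $q'$ in $L$ (Lemma~\ref{lemma:continuity-monotonicity}) with the continuity of $\Delta x_m$ in $q'$ and of the pro-rata weight in $L$. Where you differ is on the upper bound: the paper's appendix proof actually stops after establishing continuity and never formally argues $\lim_{L\to\infty}\mathcal{F}(L)\le\alpha\,\Delta x$ (that claim is only asserted in the main-text proof sketch), whereas you supply an explicit and in fact stronger argument — token conservation gives $\sum_m\Delta x_m\le\Delta x$, and each pro-rata weight is at most $1$, so $\mathcal{F}(L)\le\alpha\,p_x\,\Delta x$ \emph{uniformly} in $L$, not just in the limit. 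This is a genuine improvement in completeness, and your remark about the stray $p_x$ factor (the lemma states the bound in token units while $\delta_m$ is defined in dollars) correctly identifies a units inconsistency in the statement itself. One small caveat: your claim that the denominator $L+\tilde{P}_{n,m}$ ``never vanishes'' requires $\tilde{P}_{n,m}>0$; at a tick with no passive liquidity the weight is discontinuous at $L=0$ (it jumps from undefined/zero to $1$), an edge case the paper also silently assumes away.
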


\begin{proof}

\noindent
Without loss of generality, assume that the trade is selling token X, $\Delta x > 0$ (see Remark~\ref{rmk:direction} if the trade is selling token Y).

Now, define the \emph{fee function} as
\[
\delta_m \triangleq \Delta x_m \cdot \alpha \cdot p_x,
\]
where $\Delta x_m$ is the amount of token $X$ exchanged at tick $m$, $\alpha$ is the pool’s fee rate, and $p_x$ is the dollar value of one unit of token $X$.  
Recall that  $P_m$ denotes the total liquidity provided by passive LPs at tick $m$. Denote $\tilde{P}_{m,n} := \sum_{i \not =n} P_{i,m} $  as all the liquidity \textbf{except} the liquidity $P_{n,m} = L \in \mathbb{R}^+ $ of provider $n$. Substituting this into the expression for fee distribution, we obtain:
\begin{equation*}
\begin{split}
&\mathcal{F}_{m} = \Delta x_m \alpha \frac{P_{n,m}}{\tilde{P}_{n,m} + P_{n,m}} p_x \\
&\mathcal{F} = \sum_{m|t_m \in (a,b)} \mathcal{F}_{m}
\end{split}
\end{equation*}

\noindent
Now let $\Delta x_m$ be such that, 
\begin{equation*}
 \Delta x_m =
    (\tilde{P}_{n,m} + P_{n,m})\left(\frac{1}{\sqrt{\hat{q}'}} - \frac{1}{\sqrt{a_m}}\right).
\end{equation*}

\noindent
Notice that $q' \rightarrow q$ as $P_{n,m} \rightarrow \infty$. Since $\Delta x_m$ is continuous in $q'$ and $q'$ is continuous in $P_m$ by {Lemma~\ref{lemma:continuity-monotonicity}}, it follows that $\Delta x_m$ is also continuous in $P_m$ and hence in $P_{n,m}$. Finally, since $\mathcal{F}$ is defined as a sum of continuous functions, it is itself continuous in $P_{n,m}$.

\end{proof}


\begin{replemma}[\ref{lemma:continuity-monotonicity}]
\label{sec:proofcontinuitymonotonicity}
Let \(q\) be the current price in a CLMM, where $q \in (t_{m-1}, t_{m}]$. 
Suppose we have a target trade that pays $\Delta x$ $X$ tokens to the pool and asks for $Y$ tokens, which can be supported by the CLMM. 
Immediately before the trade, a JIT LP adds liquidity $L \in [0, \infty)$ to price range $(a, b)$ where $a \le t_{m-1} < t_{m} \le b$. 
Let $q' (< q)$ be the resulting price after the trade as a function of $L$. 
Then, $q'$ is continuous, strictly decreasing in $L$, and $\lim_{L \to \infty} q' = q$
    \begin{equation}
    \lim_{P_\lfloor q\rfloor \rightarrow \infty} q' = q.   
     \end{equation}

\end{replemma}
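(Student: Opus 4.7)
The plan is to characterize $q'$ as the implicit solution of the CLMM swap-balance equation and then read off continuity, strict monotonicity, and the limit from that equation.

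First, I would write the swap-balance relation explicitly. The amount of token $X$ absorbed when the pool price drops from $q$ to $q' < q$ is the sum of tick-wise contributions weighted by the liquidity active at each tick. Letting $P_j^{(L)} \triangleq P_j + L \cdot \mathbf{1}\{[t_{j-1}, t_j] \subseteq (a,b)\}$ denote the per-tick total liquidity after JIT insertion, the defining equation
\[
F(L, q') \;\triangleq\; \sum_{j} P_j^{(L)} \left( \tfrac{1}{\sqrt{\hat q_j'}} - \tfrac{1}{\sqrt{\hat q_j}} \right) \;=\; \Delta x
\]
characterizes $q'$, where $\hat q_j$ and $\hat q_j'$ denote the projections of $q$ and $q'$ onto $[t_{j-1}, t_j]$. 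The hypothesis $a \le t_{m-1} < t_m \le b$ ensures that the JIT liquidity is active in the tick containing $q$ and in every tick the swap traverses while it remains inside $(a,b)$.

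Second, I would verify three properties of $F$ directly from its closed form: (i) for fixed $L$, $F(L, \cdot)$ is continuous and strictly decreasing in $q'$ on $(0, q]$, with $F(L, q) = 0$ and $F(L, q') \to \infty$ as $q' \downarrow 0$; (ii) for fixed $q' < q$ in the JIT-active region, $F(\cdot, q')$ is continuous and strictly increasing in $L$, since every traversed-tick summand contributes a strictly positive coefficient $\tfrac{1}{\sqrt{\hat q_j'}} - \tfrac{1}{\sqrt{\hat q_j}} > 0$ in front of $L$; (iii) $F$ is jointly continuous. Property (i) then yields a unique root $q'(L) \in (0, q)$ of $F(L, q'(L)) = \Delta x$ for every $L \ge 0$. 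Combining (i) and (ii), raising $L$ strictly increases $F$ at the previous root, so the root must shift in the direction that makes $F$ decrease again, i.e., toward $q$; this gives strict monotonicity (with $q'(L)$ moving upward in $L$ since $q'(L) < q$). Continuity of $L \mapsto q'(L)$ then follows from joint continuity of $F$ together with strict monotonicity in $q'$ (or, equivalently, the implicit function theorem applied piecewise within each tick interval, with continuity at tick boundaries inherited from global continuity of $F$).

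Finally, for the limit I would use the elementary lower bound
\[
F(L, q') \;\ge\; L \left( \tfrac{1}{\sqrt{q'}} - \tfrac{1}{\sqrt{q}} \right),
\]
valid whenever $q' \in [\max\{a, t_{m-1}\}, q]$, the regime in which the JIT's $L$ is fully active throughout $(q', q)$. Imposing $F = \Delta x$ gives $\tfrac{1}{\sqrt{q'(L)}} \le \tfrac{1}{\sqrt{q}} + \Delta x / L$, which forces $q'(L) \to q$ as $L \to \infty$; monotonicity guarantees that once $L$ is large enough, $q'(L)$ sits inside the tick containing $q$, so the lower bound applies. The main obstacle is not conceptual but a matter of bookkeeping: handling tick-boundary crossings cleanly as $q'(L)$ passes through a tick when $L$ varies, and verifying that continuity is inherited across tick boundaries. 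Once $F$ is written down explicitly, the algebraic content reduces to the monotonicity and continuity of $t \mapsto 1/\sqrt{t}$.
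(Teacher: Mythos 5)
Your proposal is correct and follows essentially the same route as the paper: both characterize $q'$ implicitly through the tick-wise balance equation $\sum_k (P_k + L\,\mathbb{I}[\cdot])\,\psi_k(q') = \Delta x$, derive monotonicity and continuity from the monotonicity/continuity of the summands, and obtain the limit by showing the JIT's tick-$m$ contribution forces $1/\sqrt{q'} - 1/\sqrt{q} \le \Delta x / L \to 0$. One remark: you correctly establish that $q'$ is strictly \emph{increasing} in $L$ (moving up toward $q$ from below), which is consistent with the limit claim and with the paper's own derivation, even though the lemma's wording says ``strictly decreasing''---that wording appears to be a typo in the statement rather than a flaw in your argument.
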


\begin{proof}
Let $P_k$ denote the aggregated liquidity in atomic range $(t_{k-1}, t_{k}]$ for each $k \in [M]$. Suppose $a = t_{\mu-1}$ and $b = t_\nu$ where $\mu \le m \le \nu$. 

Let $q' \in (t_{\ell-1}, t_\ell]$ where $\ell \le m$. By the CLMM policy, we have 
\begin{align*}
    \Delta x 
    &= \sum_{k=\ell}^{m} \big(P_k + L \cdot \mathbb{I}[\mu \le k \le \nu]\big) \left( \frac{1}{\sqrt{\max\{q', t_{k-1}\}}} - \frac{1}{\sqrt{\min\{q, t_{k}\}}} \right) \\ 
    &= \sum_{k=1}^{m} \big(P_k + L \cdot \mathbb{I}[\mu \le k \le \nu]\big) \max\left\{\frac{1}{\sqrt{\max\{q', t_{k-1}\}}} - \frac{1}{\sqrt{\min\{q, t_{k}\}}}, 0 \right\}.
\end{align*}

Equivalently, we can denote 
\[
    \phi_k(L) \triangleq P_k + L \cdot \mathbb{I}[\mu \le k \le \nu], \qquad 
    \psi_k(q') \triangleq \max\left\{\frac{1}{\sqrt{\max\{q', t_{k-1}\}}} - \frac{1}{\sqrt{\min\{q, t_{k}\}}}, 0 \right\}.
\]

We have the following observations:
\begin{enumerate} 
    \item \label{item:phi1} $\phi_m(L) = P_m + L$; 
    \item \label{item:phi2} $\phi_k(L)$ is monotonically non-decreasing and continuous in $L$ for all $k \in [m-1]$; 
    \item \label{item:psi1} $\psi_m(q') > 0$;
    \item \label{item:psi2} $\psi_k(q')$ is monotonically non-increasing and continuous in $q'$ for all $k \in [m]$;
    \item $\sum_{k=1}^m \phi_k(L) \psi_k(q') = \Delta x$ is constant. 
\end{enumerate}

Therefore, as $L$ increases, $q'$ must not increase by \ref{item:phi2} and \ref{item:psi2}. $q'$ must not stay constant, either, by \ref{item:phi1} and \ref{item:psi1}, which implies the strict monotonicity of $q'$ in $L$. 
The continuity of $q'$ in $L$ originates from the continuity of each $\phi_k$ and $\psi_k$. 

Finally, as $L \to \infty$, if for some $k < m$, $\psi_k(q') \neq 0$, then $q' \le t_k$ and $\psi_m(q') = t_m^{-1/2} - q^{-1/2}$ is a positive constant. This contradicts to $\Delta x \ge \phi_m(L) \psi_m(q') = \infty$. 

Hence we must have $\psi_k(q') = 0$ for all $k < m$, which implies $t_{m-1} \le q' < q$. Consequently, 
\begin{alignat}{3}
     \Delta x 
    &= (P_m + L) \left( \frac{1}{\sqrt{q'}} - \frac{1}{\sqrt{q}} \right) \quad & \Longleftrightarrow \\
    q' &= \left( \frac{\Delta x}{P_m + L} + \frac{1}{\sqrt{q}} \right)^{-2} &  \Longrightarrow \\
    \lim_{L\to\infty} q' &= \left(\frac{1}{\sqrt{q}}\right)^{-2} = q. 
\end{alignat}

\end{proof}

\label{sec:prooflasttheo}

\begin{reptheorem}[\ref{theorem:utility_opt}]
    The utility function \( \mathcal{U} \), as defined in~\eqref{tickUtility}, attains a global maximum over the strategy space \( \mathcal{S} \). That is, there exists an optimal strategy \( s^* \in \mathcal{S} \) such that \( \mathcal{U}(s^*) \ge \mathcal{U}(s) \) for all \( s \in \mathcal{S} \).
\end{reptheorem}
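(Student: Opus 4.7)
The plan is to invoke Weierstrass's extreme value theorem once we have reduced the problem to a continuous function on a compact domain, and then take a maximum over a finite collection of such domains. I would proceed as follows.

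First, I would observe that the set $\mathcal{R}^{(q,q^*)}$ is a subset of $\mathcal{R}$, which in turn is built from the finite tick set $T=\{t_0,\dots,t_M\}$, so $|\mathcal{R}^{(q,q^*)}|$ is finite. This lets me decompose the optimization as
\[
\sup_{s\in\mathcal{S}}\mathcal{U}(s;\theta)\;=\;\max\!\left\{0,\;\max_{(a,b)\in\mathcal{R}^{(q,q^*)}}\;\sup_{L\in D_{a,b}}\mathcal{U}^{(a,b)}(L;\theta)\right\},
\]
where the outer $0$ accounts for the non-participation action $\perp$, and $D_{a,b}\triangleq\{L\ge 0:L\cdot V_{(a,b)}\le\rho\}=[0,\rho/V_{(a,b)}]$ is the feasible liquidity range dictated by the budget constraint. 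Since $V_{(a,b)}>0$ at the interior price $q\in[a,b]$, each $D_{a,b}$ is a closed, bounded interval in $\mathbb{R}$, hence compact.

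Next, I would fix $(a,b)$ and argue that $L\mapsto\mathcal{U}^{(a,b)}(L;\theta)$ is continuous on $D_{a,b}$. The post-swap price $q'$ is continuous (and strictly monotone) in $L$ by Lemma~\ref{lemma:continuity-monotonicity}. Plugging $q'(L)$ into the closed-form tick expressions, the fee component $\mathcal{F}^{(a,b)}(L;\theta)$ is continuous in $L$ by Lemma~\ref{lemma:limit-fee}, while the cost component $\mathcal{C}^{(a,b)}(L;\theta)$ is a finite sum of compositions of the continuous map $L\mapsto q'$ with the continuous reserve functions $x(\cdot),y(\cdot)$ from~\eqref{eq:position-holdings}, hence continuous in $L$. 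The difference $\mathcal{U}^{(a,b)}=\mathcal{F}^{(a,b)}-\mathcal{C}^{(a,b)}$ is therefore continuous on the compact interval $D_{a,b}$, and Weierstrass's theorem yields a maximizer $L^*(a,b)\in D_{a,b}$.

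Finally, since $\mathcal{R}^{(q,q^*)}$ is finite, the outer maximum over $(a,b)$ is attained at some $(a^*,b^*)$, giving a candidate $(L^*(a^*,b^*),a^*,b^*)\in\mathcal{S}$. Comparing its utility to $\mathcal{U}(\perp)=0$ and picking the larger value produces a global maximizer $s^*\in\mathcal{S}$. The main subtlety I anticipate is justifying the continuity of $\mathcal{C}^{(a,b)}$ cleanly: the tick-wise sums involve the truncation $\hat q_m=\min\{t_{m+1},\max\{q,t_m\}\}$ and its post-swap analogue $\hat q'_m$, and while each $\hat q'_m$ is continuous in $q'$ (and hence in $L$) as the composition of $\min/\max$ with a continuous argument, one must also check that the index set $\{m:t_m\in(a,b)\}$ contributing nonzero terms does not change discontinuously as $L$ varies. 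This is handled by the fact that ticks $t_m$ outside $[\min\{q,q'(L)\},\max\{q,q'(L)\}]$ contribute $\Delta x_m=0$ and therefore zero fee and zero reserve change, so boundary crossings of $q'(L)$ through a tick add or remove only vanishing terms, preserving continuity.
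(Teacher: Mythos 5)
Your proposal is correct and follows essentially the same route as the paper's proof: enumerate the finitely many admissible tick ranges, restrict $L$ to the compact budget interval $[0,\rho/V_{(a,b)}]$, establish continuity of $\mathcal{F}^{(a,b)}$ and $\mathcal{C}^{(a,b)}$ in $L$ via the continuity of $q'(L)$ (Lemma~\ref{lemma:continuity-monotonicity}) and Lemmas~\ref{lemma:limit_price_impact} and~\ref{lemma:limit-fee}, apply the extreme value theorem per range, and take the maximum over the finite collection (including $\perp$). Your additional check that the set of contributing ticks changes only by vanishing boundary terms as $q'(L)$ crosses a tick is a point the paper's proof leaves implicit, and is a welcome refinement rather than a deviation.
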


\begin{proof}
Fix a specific swap $(\Delta x, q, P, p_x, p_y,\alpha)$ and let $q^\ast$ denote the pool price after the swap assuming no JIT participation. By construction (\S\ref{sec:back-jit}), any rational JIT LP will choose a price range $(a, b) \subseteq (\lfloor q \rfloor_M, \lceil q^\ast \rceil_M)$ (or its reverse if $q^\ast < q$). Since tick prices are discrete, the set of such price intervals is finite. Fix a pair $(a, b)$ and consider all feasible liquidity amounts $L$ such that the total cost does not exceed the JIT's budget $\rho$, i.e.,
\[
L \cdot \V_{(a,b)} + v \leq \rho.
\]
\noindent
The utility function is:
\[
 \mathcal{U}^{(a,b)}(L) = \mathcal{F}^{(a,b)}(L) - \mathcal{C}^{(a,b)}(L) - v^\ast,
\]
where $v^\ast$ is fixed in the simplified setting.

From Lemma 1, $q'$ is continuous in $L$, and from {Lemma~\ref{lemma:limit_price_impact}} and {Lemma~\ref{lemma:limit-fee}}, both $\mathcal{C}$ and $\mathcal{F}$ are continuous in $L$. In particular:
\begin{itemize}
    \item $\mathcal{C}(L) \to 0$ as $L \to 0$, since no liquidity implies no price impact.
    \item $\mathcal{C}(L) \to \Delta x (p_y q - p_x)$ as $L \to \infty$, capturing the full cost of slippage.
    \item $\mathcal{F}(L)$ is bounded above by the total fee $\delta$ paid by the swap and is continuous due to its dependence on $L$ and the fee allocation formula.
\end{itemize}

\noindent
Therefore, for each fixed $(a, b)$, $\mathcal{U}^{(a,b)}(L)$ is continuous on the compact interval $
\left[0, {(\rho - v)}/{\V_{(a,b)}}\right]$, 
and   achieves a maximum by the extreme value theorem.

 Since the set of valid $(a, b)$ is finite and $\mathcal{U}$ has a maximum for each fixed $(a, b)$, it follows that a global maximum over the entire strategy space $\mathcal{S}$ exists.

\end{proof}

\section{Proof of Proposition~\ref{prop: feeVSil}} \label{sec:proofprofits}

\begin{repproposition}[\ref{prop: feeVSil}]
   For a transaction with parameters $\theta$ (assuming without loss of generality that $\Delta x > 0$) and a price range $(a,b)$, if 
   \[
   \frac{p_x}{p_y} < \frac{\Delta y_m}{\Delta x_m (1+\alpha)} \quad \forall m \in (p(a), p(b)),
   \]
   then \(\mathcal{F}^{(a,b)} < \mathcal{C}^{(a,b)} \Rightarrow \mathcal{U}^{(a,b)} < 0\).
\end{repproposition}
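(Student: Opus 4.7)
The plan is to decompose the utility tick-by-tick and write $\mathcal{U}_m$ as a positive multiple of the quantity $(1+\alpha)p_x\Delta x_m - p_y\Delta y_m$, so that the hypothesis immediately forces every tick contribution to be negative.

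First I would recall the fee formula from \S\ref{sec:fees}, giving $\mathcal{F}_m = \alpha\, p_x\, \Delta x_m \cdot \tfrac{L}{L+P_m}$. For the price impact on tick $m$, I would observe that the JIT LP controls fraction $L/(L+P_m)$ of the liquidity active on that tick and is therefore credited with exactly that fraction of the pool's net token flow: the pool gains $\Delta x_m$ units of $X$ and loses $\Delta y_m$ units of $Y$ on tick $m$. Under the assumption $p_x = p_x'$, $p_y = p_y'$ (cf.\ Eq.~\eqref{eq:priceimpact1b}) and by Definition~\ref{def:absprim}, the price impact contribution of the JIT LP on tick $m$ is
\[
\mathcal{C}_m \;=\; \frac{L}{L+P_m}\,\bigl(p_y\,\Delta y_m - p_x\,\Delta x_m\bigr),
\]
i.e.\ the negative of the change in dollar value of the LP's per-tick holdings.

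Combining the two expressions yields
\[
\mathcal{U}_m \;=\; \mathcal{F}_m - \mathcal{C}_m \;=\; \frac{L}{L+P_m}\,\bigl[(1+\alpha)\,p_x\,\Delta x_m \;-\; p_y\,\Delta y_m\bigr].
\]
The hypothesis $\tfrac{p_x}{p_y} < \tfrac{\Delta y_m}{\Delta x_m(1+\alpha)}$ is equivalent to saying that the bracketed quantity is strictly negative for every tick $m$ in the relevant range. Since $L/(L+P_m) > 0$ for any admissible strategy with $L>0$, each $\mathcal{U}_m < 0$, and summing the tick-wise decomposition in Eq.~\eqref{tickUtility} gives
\[
\mathcal{U}^{(a,b)} \;=\; \sum_{m\,:\,t_m \in (a,b)} \mathcal{U}_m \;<\; 0,
\]
which also yields the intermediate implication $\mathcal{F}^{(a,b)} < \mathcal{C}^{(a,b)}$ claimed in the statement.

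The main obstacle is purely bookkeeping: one must pin down the sign convention of Definition~\ref{def:absprim} (positive $\mathcal{C}$ means loss) and correctly attribute the per-tick flows $\Delta x_m,\Delta y_m$ to the JIT LP via its pro-rata share $L/(L+P_m)$. Once these are in place, the proof reduces to a one-line algebraic comparison, with no need to invoke the existence or continuity results of Lemmas~\ref{lemma:continuity-monotonicity}--\ref{lemma:limit-fee}.
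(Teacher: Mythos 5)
Your proposal is correct and follows essentially the same route as the paper: both compare $\mathcal{F}_m$ and $\mathcal{C}_m$ tick-by-tick using $\mathcal{F}_m=\alpha p_x\Delta x_m\frac{L}{L+P_m}$ and $\mathcal{C}_m=\frac{L}{L+P_m}(p_y\Delta y_m-p_x\Delta x_m)$, reduce the hypothesis to $(1+\alpha)p_x\Delta x_m<p_y\Delta y_m$ on every tick, and sum. Your write-up is in fact cleaner on the sign conventions than the paper's appendix, which contains typos stating the concluding inequality as $\mathcal{U}^{(a,b)}>0$ rather than the intended $\mathcal{U}^{(a,b)}<0$.
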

\begin{proof} 
%
%
Recall that \( \mathcal{F} \) denotes the received fee and is therefore non-negative, i.e., \( \mathcal{F} \ge 0 \). To determine the sign of \( \mathcal{U} \), it suffices to understand when \( \mathcal{C} > \mathcal{F} \).

Consider then a specific position \( (a,b) \) and the tick-versions of \( \mathcal{C} \) and \( \mathcal{F}\). Note that
\[ \mathcal{F}^{(a,b)}_m < \mathcal{C}^{(a,b)}_m \; \forall m \implies \mathcal{U}^{(a,b)} > 0.
\]

From Appendix~\ref{proof: Lemma6}, 
$$\mathcal{C}^{(a,b)}_m = -p_x\left(\frac{L}{L+P_m}\right) \Delta x_m + p_y
 \Delta y_m \left(\frac{L}{L+P_m}\right).$$
Then,
\[
 \Delta x_m \alpha \frac{L}{L + P_m}p_x <  -p_x\left(\frac{L}{L+P_m}\right) \Delta x_m + p_y
 \Delta y_m \left(\frac{L}{L+P_m}\right)
\]
Simplifying $\frac{L}{L+P_m}$ and \(\Delta x_m\)
\[
  \alpha p_x <  -p_x + p_y\frac{\Delta y_m}{\Delta x_m} \implies 
  \frac{p_x}{p_y}(1+ \alpha) <  \frac{\Delta y_m}{\Delta x_m}.
\]
Therefore,
\[
\frac{p_x}{p_y} <  \frac{\Delta y_m}{\Delta x_m(1+ \alpha)}.
\]
The above inequality is a sufficient condition for $\mathcal{U}^{(a,b)} > 0$, which concludes the proof.
\end{proof}
\section{Utility Function Simulation Visualization}\label{app:images}

To visualize the impact of strategic positioning in JIT provisioning, we simulate optimal strategies for each of the three archetypes introduced earlier: \emph{arbitrageur}, \emph{overpriced}, and \emph{overshoot} trades. For each case, we select a representative large trade (with $\Delta x > 0$, implying $q' < q$) from Section~\ref{sec:data-analysis} that induces significant price movement. This allows us to evaluate how utility varies with budget and tick placement.

Figure~\ref{fig:arbitra} presents the utility curves for an \textit{arbitrageur trade}, where all JIT strategies result in negative utility despite fee collection, due to adverse price impact. In contrast, Figure~\ref{fig:overprice} shows an \textit{overpriced trade}, in which all positions yield positive utility—consistent with Corollary~\ref{coro2}—and the optimal strategy depends on the available budget. Finally, Figure~\ref{fig:oversho} illustrates an \textit{overshoot trade}, where certain positions only become profitable after the price crosses the market exchange rate $p_x/p_y$, with the highest utility achieved in regions where $\mathcal{C} < 0$.

\begin{figure}
\centering
\begin{tikzpicture}
\begin{axis}[
    xlabel={Budget in \$},
    xmin=0,
    xmax= 1000000,
    ylabel={Utility in \$},
    legend style={at={(0.3,0.4)}, anchor=north east},
    grid=both,
    width=8cm,
    height=8cm,
    line width=0.5pt 
]

\addplot [domain=0:1000000, dashed, black, thick] {0};
\addlegendentry{No participation}

\addplot [mark=none, color=blue] table [x=budget, y={-2; -1}, col sep=comma] {data/utility_arbitrage_trade.csv};
\addlegendentry{$-2, -1$}

\addplot [mark=none, color=red] table [x=budget, y={-2; 0}, col sep=comma] {data/utility_arbitrage_trade.csv};
\addlegendentry{$-2,  \phantom{-}0$}

\addplot [mark=none, color=cyan] table [x=budget, y={-2; +1}, col sep=comma] {data/utility_arbitrage_trade.csv};
\addlegendentry{$-2, +1$}

\addplot [mark=none, color=magenta] table [x=budget, y={-1; 0}, col sep=comma] {data/utility_arbitrage_trade.csv};
\addlegendentry{$-1,  \phantom{-}0$}

\addplot [mark=none, color=green] table [x=budget, y={-1; +1}, col sep=comma] {data/utility_arbitrage_trade.csv};
\addlegendentry{$-1, +1$}

\addplot [mark=none, color=brown] table [x=budget, y={0; +1}, col sep=comma] {data/utility_arbitrage_trade.csv};
\addlegendentry{$ \phantom{-}0, +1$}

\end{axis}
\end{tikzpicture}
\caption{\textit{Arbitrageur trade:}  Utility curves for different choices of $(a,b)$. The legend indicates how many ticks away the position endpoints are relative to $\lfloor q \rfloor$. For example, the blue line labeled $-2, -1$ corresponds to the position $(t_{\lfloor q \rfloor - 2}, t_{\lfloor q \rfloor - 1})$. Despite the presence of fees, all strategies yield negative utility in this scenario, as the price impact is too severe for the sandwich strategy to be profitable.} \label{fig:arbitra}
\end{figure}
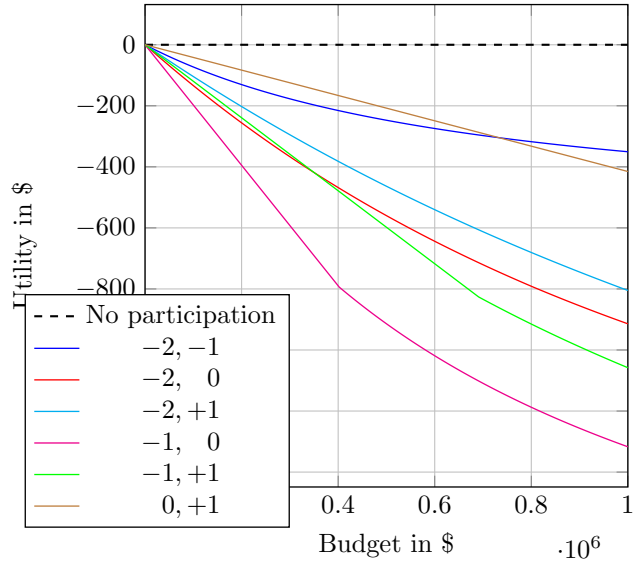

\begin{figure}
\centering
\begin{tikzpicture}
\begin{axis}[
    xlabel={Budget in \$},
    xmin=0,
    xmax=1000000,
    ylabel={Utility in \$},
    legend style={at={(0.9,0.4)}, anchor=north east},
    grid=both,
    width=8cm,
    height=8cm,
    line width=0.5pt
]
\addplot [domain=0:1000000, dashed, black, thick] {0};
\addlegendentry{No participation}

\addplot [mark=none, color=blue] table [x=budget, y={-1; 0}, col sep=comma] {data/utility_overprice_trade.csv};
\addlegendentry{$-1,  \phantom{-}0$}

\addplot [mark=none, color=red] table [x=budget, y={-1; +1}, col sep=comma] {data/utility_overprice_trade.csv};
\addlegendentry{$-1, +1$}

\addplot [mark=none, color=green] table [x=budget, y={0; +1}, col sep=comma] {data/utility_overprice_trade.csv};
\addlegendentry{$ \phantom{-}0, +1$}
\end{axis}
\end{tikzpicture}
\caption{\emph{Overpriced trade:} Utility curves for different choices of $(a,b)$ on an overpriced trade. The legend indicates how many ticks away the positions are relative to $\lfloor q \rfloor$. For example, the blue line labeled $-1, 0$ corresponds to the position $(t_{\lfloor q \rfloor - 1}, t_{\lfloor q \rfloor})$. All strategies yield positive utility, as the price impact consistently provides gains. The optimal strategy varies depending on the available budget.}
\label{fig:overprice}
\end{figure}
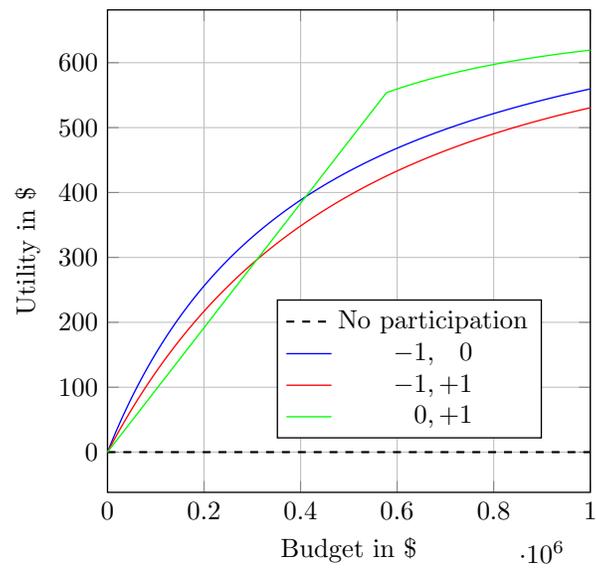

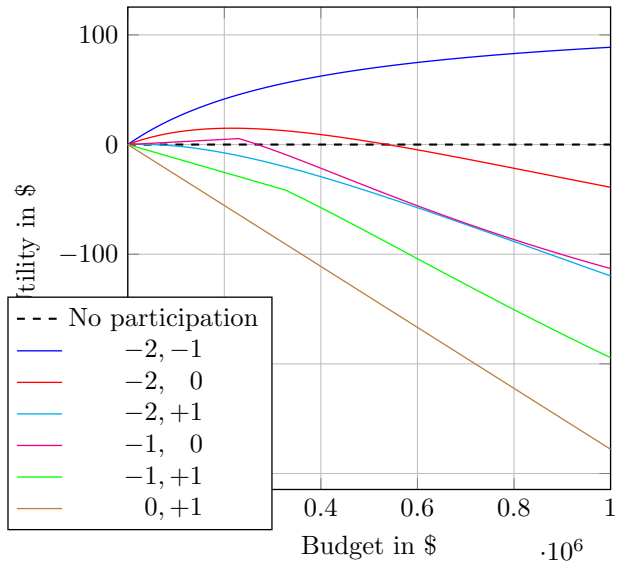
\begin{figure}
\centering
\begin{tikzpicture}
\begin{axis}[
    xlabel={Budget in \$},
    xmin=0,
    xmax=1000000,
    ylabel={Utility in \$},
    legend style={at={(0.3,0.4)}, anchor=north east},
    grid=both,
    width=8cm,
    height=8cm,
    line width=0.5pt
]

\addplot [domain=0:1000000, dashed, black, thick] {0};
\addlegendentry{No participation}

\addplot [mark=none, color=blue] table [x=budget, y={-2; -1}, col sep=comma] {data/utility_overshoot_trade.csv};
\addlegendentry{$-2, -1$}

\addplot [mark=none, color=red] table [x=budget, y={-2; 0}, col sep=comma] {data/utility_overshoot_trade.csv};
\addlegendentry{$-2,   \phantom{-}0$}

\addplot [mark=none, color=cyan] table [x=budget, y={-2; 1}, col sep=comma] {data/utility_overshoot_trade.csv};
\addlegendentry{$-2, +1$}

\addplot [mark=none, color=magenta] table [x=budget, y={-1; 0}, col sep=comma] {data/utility_overshoot_trade.csv};
\addlegendentry{$-1,  \phantom{-}0$}

\addplot [mark=none, color=green] table [x=budget, y={-1; +1}, col sep=comma] {data/utility_overshoot_trade.csv};
\addlegendentry{$-1, +1$}

\addplot [mark=none, color=brown] table [x=budget, y={0; +1}, col sep=comma] {data/utility_overshoot_trade.csv};
\addlegendentry{$ \phantom{-}0, +1$}

\end{axis}
\end{tikzpicture}
\caption{\emph{Overshoot trade:}  Utility curves for different choices of $(a,b)$ on an overshoot trade. The legend indicates how many ticks away the positions are relative to $\lfloor q \rfloor$. For example, the blue line labeled $-2, -1$ corresponds to the position $(t_{\lfloor q \rfloor - 2}, t_{\lfloor q \rfloor-1})$. Notice that some strategies achieve their optimal utility before exhausting the entire budget. The most profitable strategy concentrates budget on the position $(t_{\lfloor q \rfloor - 2}, t_{\lfloor q \rfloor - 1})$, which lies at the end of the trade, beyond the market price $p_x/p_y$, leading to a region where $\mathcal{C} < 0$.}
\label{fig:oversho}
\end{figure}

\end{document}